\documentclass[12pt,draftcls,onecolumn]{IEEEtran}
\usepackage{amsmath}
\usepackage{amsthm}
\usepackage{amsfonts}
\usepackage{amssymb}
\usepackage[utf8]{inputenc}
\DeclareUnicodeCharacter{00A0}{ }
\usepackage{graphicx}
\usepackage{graphics}
\usepackage{float}
\usepackage{tikz}
\usetikzlibrary{shapes,snakes,patterns}
\usepackage{epstopdf}
\usepackage{booktabs}

\usepackage[justification=centering]{caption}
\usepackage{enumerate} 
\usepackage{cite}
\usepackage{placeins}
\newcommand{\vast}{\bBigg@{4}}
\newcommand{\Vast}{\bBigg@{5}}
\newtheorem{theorem}{Theorem}[]
\newtheorem{lemma}[]{Lemma}
\setcounter{secnumdepth}{5}
\usepackage{suffix}
\usepackage{mathtools}
\usepackage{amsthm}
\usepackage{xfrac}
\usepackage{relsize}
\usepackage{xcolor}
\usepackage[inline]{enumitem}
\theoremstyle{definition}
\newtheorem{defn}{Definition}[]

\usepackage[list=true]{subcaption}
\captionsetup{belowskip=-5pt}
\usepackage{pifont}
\usepackage{changebar}
\newcommand{\cmark}{\ding{51}}%
\newcommand{\xmark}{\ding{55}}%
\title{Transmit Power Policy and Ergodic Multicast Rate Analysis of Cognitive Radio Networks in Generalized Fading}
\author{Athira Subhash$^{*}$, Muralikrishnan Srinivasan$^{*}$, Sheetal Kalyani, Lajos Hanzo {\footnote{{Athira Subhash, Muralikrishnan Srinivasan and Sheetal Kalyani are with the Department of Electrical Engineering, Indian Institute of Technology Madras. (email:\{ee14d206@ee,ee16d027@smail,skalyani@ee\}.iitm.ac.in).\\Lajos Hanzo is with School of Electronics and Computer Science, University of Southampton.(email:hanzo@soton.ac.uk).\\$^*$Athira Subhash and Muralikrishnan Srinivasan are co-first authors.}}} } 

\begin{document}
	
	\maketitle
	\begin{abstract}
		This paper determines the optimum secondary user power allocation and ergodic multicast rate of point-to-multipoint communication in a cognitive  radio  network in the presence of outage constraints for the primary users. 
		Using tools from extreme value theory (EVT), it is first proved that the limiting distribution of the minimum of independent and identically distributed (i.i.d.) signal-to-interference ratio (SIR) random variables (RVs) is a Weibull distribution, when the user signal and the interferer signals undergo independent and non-identically distributed (i.n.i.d.) $\kappa-\mu$ shadowed fading. Also, the rate of convergence
		of the actual minimum distribution to the Weibull distribution is derived. This limiting distribution is then used for determining the optimum transmit power of a secondary network in an underlay cognitive radio network subject to outage constraints at the primary network in a generalized fading scenario. Furthermore, the asymptotic ergodic multicast rate of secondary users is analyzed for varying channel fading parameters. 
	\end{abstract}
	\begin{IEEEkeywords}
		extreme value theory, $\kappa-\mu$ shadowed fading, outage probability, cognitive radio
	\end{IEEEkeywords}
	\section{Introduction}
	\par With the advances in wireless technology, the presence of wireless devices has become ubiquitous. Furthermore, with the advent of the Internet of Things (IoT), the number of connected devices accessing the spectrum is set to increase in the coming times. With the upcoming increase in devices and hence increasing the traffic, it will be very hard to find free spectrum. Cognitive radio (CR) is one of the promising techniques mitigating spectrum scarcity in wireless communication systems \cite{dalta2009cr, wang2011cr, liang2011cr, goldsmith2009cr,miridakis2018mimo}. In cognitive radio networks (CRNs), there are three popular modes of spectrum sharing between primary users (PU) and secondary users (SU) - underlay, overlay and interweave \cite{zhao2007cr,khoshkholgh2010cr,patel2017achievable,patel2018many}. As a further development, the authors of \cite{zou2015relay,ding2019security} have studied the security aspects of a CR system in the presence of eavesdroppers.  
	
	\par Throughout this paper, we consider the underlay mode, in which the secondary network is allowed to access the spectrum allocated to the primary network provided that the interference caused by the SU transmitter does not unduly deteriorate the performance of the primary network. An important problem in CRNs is the choice of power policy at the SU-Tx (transmitter), so that the interference at the PU-Rx (receiver) remains below an admissible threshold. Several authors \cite{zhang2009cr, suraweera2010cr, rezki2012cr, kang2011cr, smith2013cr} have studied the performance of underlay CRNs under various interference constraints. In \cite{hanif2017cr}, different-power adaptive transmit antenna selection (TAS) schemes were analyzed for the underlay CRN. Furthermore, the authors of \cite{patel2016achievable} have determined the optimal rate sharing parameters for both the SU and the PU, so that the achievable rates were maximized. Similarly, recent contributions \cite{louis_ergodic, badarneh_asymptotic} have also considered the performance of an interference-limited underlay CRN relying on continuous power adaptation at the SU. In \cite{badarneh_asymptotic}, the secondary transmitter is assumed to transmit information to the specific SU, having the $k$th highest signal-to-interference ratio (SIR). The authors of \cite{louis_ergodic} have also investigated the ergodic capacity of the secondary network in an underlay CRNs contaminated by the interference arriving from the primary network in conjunction with various scheduling schemes, including a multicast scheduling (MS) scheme designed for enhancing the fairness among the users. The authors of \cite{miridakis2018mimo} studies the optimal power allocation, the effective number of secondary transmit antennas, the efficient trade off between transmit-and-harvest secondary antennas, and the average  channel  capacity  of  an energy harvesting-enabled  secondary  system in a massive MIMO CRN. In Table \ref{literture}, we provide a bold summary and comparison of the seminal literature relying on system models similar to our scenario. 
	\par The authors of \cite{louis_ergodic, badarneh_asymptotic} consider the analysis of power policy at the SU and the ergodic capacity of the SU in Rayleigh fading channels. Our focus in this treatise is on extending these results to general fading scenarios. At the time of writing, generalized multipath fading models such as the $\kappa-\mu$ and the $\eta-\mu$ fading distributions are generating significant research interests \cite{yacoub_k_mu}. They model the small-scale variations in the fading channel in line of sight (LOS) and non-line of sight (NLOS) conditions respectively. To investigate the effects of shadowing on the dominant LOS component, the authors of \cite{paris2014statistical} and \cite{cotton_d2d} have developed a generalization of the shadowed Rician fading called the $\kappa-\mu$ shadowed fading model. The $\kappa-\mu$ shadowed fading has been shown to unify the $\kappa-\mu$ and $\eta-\mu$ fading models \cite{pozas_shadowed} and to have a wide variety of applications ranging from land-mobile satellite systems to device-to-device communication \cite{cotton_d2d}. Performance metrics conceived for generalized fading have been studied extensively in \cite{celia2014capacity, zhang2015effective, chen2016outage, li2017rate, zhang2017hos, chandrasekaran2015performance,thomas2016error, morales2012outage, paris2013outage, ermolova2014outage, kumar2015coverage, kumar2015outage,zhang2017performance, parthasarathy2017coverage, parthasarathy2018evm, srinivasan2018secrecy}. The exact outage and rate expressions in the presence of co-channel interference (CCI) were studied in \cite{kumar2017outage} only quite recently. 
	\begin{table}[H]
		\centering
		\begin{tabular}{|c|c|c|c|c|c|c|c|}
			\hline
			& Our model &\cite{badarneh_asymptotic}-2019 &\cite{aghazadeh2018performance}-2018  & \cite{louis_ergodic}-2016 & \cite{khan2015performance}-2015& \cite{kang2011cr}-2011     \\ \hline
			
			Number of PU-Rx & Multiple & Single & Single &Multiple  & Multiple & Single  \\ \hline
			
			Number of SU-Rx & Multiple & Multiple & Single  & Multiple & Single & Single  \\ \hline
			
			\textbf{\begin{tabular}[c]{@{}c@{}}Instantaneous(I)/\\ Statistical(S) CSI\end{tabular}} & S & \begin{tabular}[c]{@{}c@{}}Su-Tx to Pu-Rx : I\\ Rest of the links : S\end{tabular} & \textcolor{black}{both} & S & I & I    \\ \hline
			
			Channel Fading &$\kappa-\mu$ shadowed &Nakagami & Rayleigh  & Rayleigh & \begin{tabular}[c]{@{}c@{}} Nakagami\end{tabular}  & -  \\ \hline
			
			\textbf{\begin{tabular}[c]{@{}c@{}}Interference from\\ PU-Tx\end{tabular}} & \cmark & \cmark & \xmark & \cmark & \xmark &  \cmark   \\ \hline
			
			\textbf{{\begin{tabular}[c]{@{}c@{}}Outage constraints \\ for PU-Rx\end{tabular}}} & \cmark &  \xmark & \xmark & \cmark & \xmark & \cmark   \\ \hline
			
			Usage of EVT & \cmark & \cmark & \xmark & \xmark & \xmark & \xmark   \\ \hline
			\textbf{{\begin{tabular}[c]{@{}c@{}}Expression for  \\ secondary capacity\end{tabular}}} & \cmark & \xmark & \xmark & \cmark & \xmark & \xmark   \\ \hline
		\end{tabular}
		\caption{Comparison with existing literature.}
		\label{literture}
	\end{table}
	
	\par A feature that is common among the above contributions is the complicated nature of the PDF and the CDF of the SIR \cite{chen2016outage}, \cite{morales2012outage, paris2013outage, ermolova2014outage, kumar2015coverage, kumar2015outage, parthasarathy2017coverage, parthasarathy2018evm, kumar2017outage}. For example, the recent work \cite{kumar2017outage}, which generalizes all existing results, considers the scenario when the signal of interest (SOI) and the CCI to undergo i.n.i.d. $\kappa-\mu$ shadowed fading and derives the CDF of SIR in terms of an infinite summation of the Lauricella function of the fourth kind. This Lauricella function itself involves an N-fold infinite summation, where $N$ denotes the number of interferers. The complementary cumulative distribution function (CCDF) of the minimum of independent random variables (RVs) is given by the product of the CCDF of each of the variables. Hence, in the case of $L$ independent and identically distributed (i.i.d.) RVs, the CCDF of the minimum is given by the $L$th power of the common CCDF. 
	\par Determining the CDF/CCDF of the minimum of $L$ such i.i.d. SIR realizations has a direct application in SU power control and in deriving the ergodic multicast rate in CRN \cite{louis_ergodic, badarneh_asymptotic}. Calculating the outage constraints over several PU-Rx requires the knowledge of the CDF of the minimum SIR. Furthermore, the ergodic rate of the MS scheme in the secondary network is determined directly by the SIR of the weakest user. However, the need for raising the CCDF of SIR random variables to power $L$ makes the corresponding mathematical analysis very difficult. In fact, even the evaluation of the exact CCDF of the minimum of two SIR RVs with each SIR RV having two i.n.i.d interferers in a $\kappa-\mu$ shadowed fading environments takes more than an hour to compute in Mathematica when one uses the series expansion of \cite[Eqn. (8)]{kumar2017outage}. Further more, the evaluation of the exact CCDF of the minimum of four SIR RVs with each SIR RV having four i.n.i.d interferers in a $\kappa-\mu$ shadowed fading environment times out in Mathematica. Hence, it is imperative that a simple limiting distribution is found for the minimum.
	
	\par Extreme Value Theory (EVT) has been routinely used in the literature for characterizing the asymptotic maximum or minimum SIR in terms of very simple probability distribution functions (PDF)/CDFs that are amenable to analysis \cite{jindal2006multicast, park2008multicast, park2009multicast, oyman2007scheduling,oyman2008scheduling, oyman2010scheduling,ahmadi2012scheduling, kountouris2009scheduling, xue2010mi, xia2014scheduling, biswas2016relay, xu2016relay, kalyani2012gamma, pun2011mimo}. Quite recently, the authors of \cite{gao2018massive} derived the statistical upper channel capacity bounds for FAS systems using EVT in the large-scale limit for Rayleigh fading channels.  Furthermore, in \cite{badarneh_asymptotic} EVT is used for determining the average throughput of the k-th best SU under continuous power adaptation at the SU. {Even though these are asymptotic results, they are observed to hold fairly well even for $20$ receivers in the SU network. } In an interference-limited scenario when the source and interferers undergo i.n.i.d. $\kappa-\mu$ shadowed fading, the authors of \cite{subhash2019asymptotic} use EVT for proving that the limiting distribution of the maximum of SIR RVs converges to a Frechet distribution \cite{gumbel2012statistics} and further derives the corresponding rate of convergence. They also prove the convergence of moments of the true maximum distribution to the moments of the asymptotic maximum distribution.
	\par Against this backdrop, in this work we use EVT to determine the power adaptation at the SU underlay in an CRN,  subject to specific outage constraints for the primary users. We also use EVT for determining the ergodic multicast rate of the SUs. Our main contributions in this paper are as follows:
	\begin{itemize}
		\item Assuming that the user signal and the interferer signal undergo i.n.i.d. $\kappa-\mu$ shadowed fading, we prove that the limiting distribution of the minimum of $L$ such i.i.d. SIR RVs is a Weibull distribution. 
		\item We also derive the rate of convergence of the actual distribution of the minimum SIR to the derived asymptotic distribution.
		\item Using the limiting distribution derived, we determine a closed form expression for the optimum power to be used at the SU-Tx while ensuring that the outage constraints at the PU-Rx are met. 
		\item Further, we derive expressions for the ergodic multicast rate of point-to-multipoint communications in the secondary network.
	\end{itemize} 
	Note that the above mentioned  results hold for Rayleigh, Rician, Nakagami-m, $\kappa-\mu$ and $\eta-\mu$ faded user and interferer fading scenarios since all of these are special cases of $\kappa-\mu$ shadowed fading. Since we assume i.n.i.d. interferers, we also account for interferers having different path-loss or having unequal powers.

	\section{System model} \label{Application}
	We consider a CR scenario where the PU network consists of a PU‐Tx serving $M$ multicast PU-Rxs and a SU network that consists of a SU-Tx serving $L$ multicast SU‐Rxs. Here, all the devices have a single antenna for transmission/reception.  Furthermore, here we assume that the SU‐Tx sends common multicast information to all the SU-Rxs in the underlay mode. Since an underlay mode is considered, the SU-Tx has to rely on continuous power adaptation strategy for satisfying the instantaneous interference constraints at the PU-Rxs. The channel power gains of the links PU‐Tx $\to$ PU‐Rx$_m$, for $m = 1,2,…,M$ and SU‐Tx $\to$ SU‐Rx$_l$, for $l = 1,2,…,L$ are denoted by $h_m$, for $m = 1,2,…,M$ and $g_l$, for $l = 1,2,…,L$, respectively. Similarly, $\alpha_m$ and $\beta_l$ are the channel power gains of the interference links SU‐Tx$\to$PU‐Rx$_m$ and PU‐Tx $\to$ SU‐Rx$_l$, respectively. All the channels are considered to undergo $\kappa-\mu$ shadowed fading. Furthermore, we consider an interference-limited system, where the noise power at each of the SU-Rx (or PU-Rx) is negligible compared to the interference power received from the PU-Tx (or SU-Tx). The authors of \cite{louis_ergodic, badarneh_asymptotic} consider a similar system model except for the fact that they assume Rayleigh faded channels. Furthermore, the authors of \cite{badarneh_asymptotic} consider only one PU-Rx. The instantaneous SIRs at the $m$th PU‐Rx and $l$th SU‐Rx are
	\begin{equation} \label{sir_p}
	\gamma_{m, p}=\frac{P_p h_m}{P_s \alpha_m}, \quad m=1,..., M,
	\end{equation}
	and 
	\begin{equation}\label{sir_s}
	\gamma_{l,s}=\frac{P_s g_l}{P_p \beta_l}, \quad l=1,...,L,
	\end{equation}
	respectively.
	Here, $P_p$ is the PU‐Tx transmit power, $P_s$ is the instantaneous SU‐Tx  transmit power and $\{h_m,\alpha_m,g_l,\beta_l;m = 1,2,…,M, l = 1,2,…,L \}$ are $\kappa-\mu$ shadowed random variables. A $\kappa-\mu$ shadowed random variable $X$ with parameters $(\kappa,\mu,m,\Bar{x})$ has the following pdf \cite{paris2014statistical}:
	\begin{equation}
	f_X(x) = \frac{x^{\mu-1}}{\theta^{\mu-m}\lambda^m\Gamma[\mu]}e^{-\frac{x}{\theta}} \ _1F_1\left( m, \mu,\frac{x}{\theta} - \frac{x}{\lambda}\right), \ x \geq 0
	\label{kmu_pdf1} 
	\end{equation} where $\mathlarger{_1F_1(.)}$ is the confluent hypergeometric function, $\Gamma[.]$ is the gamma function, $\mathlarger{\theta = \frac{\bar{x}}{\mu(1+\kappa)}}$, $\mathlarger{\lambda = \frac{(\mu \kappa + m )\bar{x}}{\mu(1+\kappa)m}}$ and $\mathlarger{\bar{x} =}$ $ \mathlarger{\mathbb{E}[x]}$. Here, $\mathlarger{\mathbb{E}[.]}$ represents the expectation of a RV. Throughout this paper we assume that the CSIs of the links are not estimated frequently, but the statistics of the signal and interference links are known at the transmitters. 
	
	\section{Secondary user power control policy}
	In the underlay mode, the SU-Tx transmits over the same frequency used by the PU-Tx, even when the PU-Tx is active. Simultaneous transmission occurs as long as the quality of service (QoS) degradation at the PU-Rx due to interference from the SU-Tx is tolerable. This QoS degradation in the primary network is quantified by means of outage constraints at the PU-Rxs. Therefore, the SU-Tx must transmit at a power that keeps the outage at each of the PU-Rx below a predetermined level. Thus, transmit power policy at the SU-Tx can be mathematically formulated as follows \cite{louis_ergodic, badarneh_asymptotic},
	\begin{subequations}
		\begin{align}
		& \text{max } P_s,\\
		\text{s.t. } & \text{Pr}\{\gamma_{m,p}(P_s) \leq \gamma_0\} \leq p_0, \quad m=1,..., M \label{outage1}\\
		& P_s \leq P_{s,max}, 
		\end{align}
	\end{subequations}
	where $p_0$ is the maximum tolerable outage at each of the PU-Rx and $\gamma_0$ is the minimum desired SIR at the PU-Rx for a fixed PU transmit power $P_p$. The outage constraint in (\ref{outage1}) is equivalent to the condition where PU-Rx$_m$ with the lowest SIR satisfy the outage constraint. Hence, the power policy of SU-Tx can be alternatively formulated as
	\begin{subequations}
		\begin{align}
		&\text{max } P_s\\
		\text{s.t. } &  \text{Pr}\{\underset{1 \leq m \leq M}{\text{min}} \ \gamma_{m,p} (P_s) \leq \gamma_0\} \leq p_0  \ \label{tx_policy1}\\
		& P_s \leq P_{s,max}. \label{tx_policy2}
		\end{align}
	\end{subequations}
	Substituting the fading coefficients from (\ref{sir_p}) into (\ref{tx_policy1}), we obtain
	\begin{subequations}
		\begin{align}
		&\text{max } P_s\\
		\text{s.t. } &  \text{Pr} \left \lbrace \underset{1 \leq m \leq M}{\text{min}}  \frac{h_m}{\alpha_m} \leq \gamma_0 \frac{P_s}{P_p}  \right \rbrace  \leq p_0.
		\label{min_prob}\\
		&  P_s \leq P_{s,max}. \label{min_prob2}
		\end{align}
	\end{subequations}
	
	Here, $\{h_m;m=1,\cdots,M\}$ and $\{\alpha_m;m=1,\cdots,M\}$ are sequences of i.i.d. $\kappa-\mu$ shadowed RVs with fading parameters $(\kappa_{p},\mu_p,m_p, \bar h_p)$ and $(\kappa_{p,s},\mu_{p,s},$ $m_{p,s}, \bar \alpha_{p,s})$ respectively.Note that a more realistic model would rely on non-identical links between the transmitter and multiple receivers. However, analyzing this scenario is intractable due to the complex nature of the CCDF in generalized fading scenarios. The assumption of identical links holds true in scenarios where the users are in a stationary environment, such as ad-hoc networks in buildings or in case of slowly moving users \cite{louis_ergodic}. Similar, simplified models are widely used for the performance analysis of cognitive radio (CR) systems \cite{louis_ergodic,badarneh_asymptotic,ban2009multi,chen2006unified,song2006asymptotic}. The above-mentioned contributions analyze the performance of different CR systems assuming identical links between the transmitters and receivers. Therefore, even the study of the statistics of the minimum SIR over i.i.d. links is relevant and will hopefully serve as a spring-board for more general analysis.
	
	\par  To determine the optimum value of $P_s$ that satisfies the outage constraint in (\ref{tx_policy1}), we have to determine the CDF of the minimum of SIR RVs in a $\kappa-\mu$ shadowed fading environment. Note that, we can evaluate this using the CDF of the minimum of ratio of two $\kappa-\mu$ shadowed RVs as given in (\ref{min_prob}). The exact distribution of the minimum of any set of i.i.d. RVs $\gamma_{min} = min\{\gamma_1,\gamma_2,\cdots,\gamma_M \}$, where $\gamma_i \sim F_\gamma(z); \forall \  i \ \in \{1,\cdots,M \}$ is given by 
	\begin{equation}\label{actualcdf}
	F_{\gamma_{min}}(z) = 1-\left(1-F_\gamma(z) \right)^M.
	\end{equation}
	Hence, to evaluate the CDF in (\ref{min_prob}), we have to evaluate the $M^{th}$ power of the CCDF of ratio of $\kappa-\mu$ shadowed random variables. The exact expression for the CDF of ratio of $\kappa-\mu$ shadowed random variables is  given in  terms  of an infinite sum  of  the Lauricella’s  function  of  the  fourth  kind in \cite[Eq. 3]{kumar2019errata},\cite{kumar2017outage}. The complex nature of the CDF $F_\gamma(z)$ makes the evaluation of the $M^{th}$ power of the CDF difficult. Now, even if we found an approximation for the CDF of $\gamma$, any small error in the computation of $F_{\gamma}(z)$ will become amplified due to the exponent to which it is raised and hence it will make the corresponding distribution function less accurate. Note that even if we compute the exact distribution for large values of $M$, it will not be possible to derive any meaningful inference from them owing to the complex nature of those expressions. \\
	On the other hand, if we have a simple limiting distribution for (\ref{actualcdf}), which closely approximates the CDF values for moderate and large values of $M$, we can obtain a closed-form expression for the optimum $P_s$ that satisfies (\ref{min_prob}). For small values of $M$ we can still use the exact CDF of the minimum. Therefore, using tools from EVT, we formulate the following theorem to determine the limiting distribution of (\ref{actualcdf}), when $\gamma$ is the SIR in an $\kappa-\mu$ shadowed fading environment. We then use this theorem to evaluate the probability expression in (\ref{min_prob}) and hence obtain a closed-form expression for the optimum $P_s$. A similar approach is used for determining the ergodic multicast rate of the secondary users in \cite{badarneh_asymptotic} for Rayleigh faded channels. To the best of our knowledge, no previous work has used EVT to simplify the outage constraints at the PU-Rx. 
	
	\begin{theorem}\label{evt_main}
		Consider $K$ i.i.d. SIR RVs of the form
		\begin{equation}
		\gamma_{k} = \frac{|b_{k}|^2}{\sum\limits_{j=1}^{N}|c_{j,k}|^2},
		\label{sir_rv}
		\end{equation}
		where $\{|b_{k}|^2;1 \leq k \leq K, \}$  are i.i.d. $\kappa-\mu$ shadowed RVs with parameters $(\kappa, \mu, m, \bar x)$ and  $\{|c_{j,k}|^2; 1 \leq j \leq N \}$ are i.n.i.d. $\kappa-\mu$ shadowed random variables, with parameters $(\kappa_j, \mu_j, m_j, \bar x_j)$ $\forall \ k$, for $j=1,..,N$.  
		The asymptotic distribution of $\gamma^K_{min}= min(\gamma_1, \gamma_2,..., \gamma_K)$ is a Weibull distribution having the shape parameter $\upsilon=\mu$ and scale parameter $a_K = F^{-1}_\gamma\left( \frac{1}{K}\right)$, where $F_{\gamma}(z)$ is the common CDF of i.i.d. RVs $\gamma_k$. Let, $\gamma_{min} = \lim\limits_{K \to \infty}\gamma^K_{min}$, then we have, 
		
		\begin{equation}
		{F}_{ {\gamma}_{min}}(z) =  \begin{cases}
		1-exp(-(z/a_K)^{\upsilon}), & z \geq 0, \\
		0, & z<0.
		\end{cases} 
		\label{asymp_cdf}
		\end{equation}
	\end{theorem}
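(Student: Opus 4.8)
The plan is to recognize this as a domain-of-attraction problem for the minimum of i.i.d. samples and to invoke the classical extreme value characterization for the Weibull (Type III) limit, thereby reducing everything to establishing the small-argument behavior of the common CDF $F_\gamma$. Recall that for i.i.d. RVs with left endpoint $x_0 = \inf\{z : F_\gamma(z) > 0\}$, the normalized minimum converges to a Weibull law precisely when $x_0$ is finite and $F_\gamma$ is regularly varying at $x_0$; that is, $\lim_{t \downarrow 0} F_\gamma(x_0 + tx)/F_\gamma(x_0 + t) = x^{\upsilon}$ for every $x > 0$, in which case the shape parameter equals the index of regular variation $\upsilon$ and the scale may be taken as the lower quantile $a_K = F_\gamma^{-1}(1/K)$. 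Since $\gamma_k$ is a ratio of nonnegative RVs its left endpoint is $x_0 = 0$, so the entire argument hinges on showing that $F_\gamma(z) \sim C z^{\mu}$ as $z \downarrow 0$ for some constant $C > 0$.

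First I would extract the near-origin behavior of the numerator. From the $\kappa-\mu$ shadowed PDF in \eqref{kmu_pdf1}, as $x \downarrow 0$ we have $e^{-x/\theta} \to 1$ and ${}_1F_1(m,\mu,0) = 1$, so $f_{|b_k|^2}(x) \sim \frac{x^{\mu-1}}{\theta^{\mu-m}\lambda^m \Gamma(\mu)}$. Integrating this leading term gives $F_{|b_k|^2}(x) \sim c_0\, x^{\mu}$ with $c_0 = \left(\mu\,\theta^{\mu-m}\lambda^m \Gamma(\mu)\right)^{-1}$, so the numerator CDF is regularly varying at $0$ with index exactly $\mu$.

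Second I would transfer this behavior to the ratio by conditioning on the denominator $S = \sum_{j=1}^{N} |c_{j,k}|^2$. Writing $F_\gamma(z) = \mathbb{E}_S\!\left[F_{|b_k|^2}(zS)\right]$ and substituting the asymptotic from the previous step, I expect $F_\gamma(z) \sim c_0\, z^{\mu}\, \mathbb{E}[S^{\mu}]$ as $z \downarrow 0$, so that $F_\gamma$ inherits regular variation of index $\upsilon = \mu$ with constant $C = c_0\,\mathbb{E}[S^{\mu}]$. Two facts must be checked here: that $\mathbb{E}[S^{\mu}] < \infty$, which holds because each $\kappa-\mu$ shadowed RV possesses all positive moments and $S$ is a finite sum of such independent RVs; and that the limit may be pulled inside the expectation, which I would justify by dominated convergence after bounding $F_{|b_k|^2}(zS)/z^{\mu}$ uniformly by an integrable multiple of $S^{\mu}$.

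Finally, with regular variation of index $\mu$ at $0$ established, I would close the argument through the elementary identity $P(\gamma^K_{min} > z) = (1 - F_\gamma(z))^K$: setting $z = a_K x$ with $a_K = F_\gamma^{-1}(1/K)$ and using $\log(1-u) \sim -u$ together with $F_\gamma(a_K x) \sim x^{\mu} F_\gamma(a_K) = x^{\mu}/K$, I obtain $\lim_{K \to \infty} P(\gamma^K_{min}/a_K > x) = \exp(-x^{\upsilon})$, which is exactly the Weibull CCDF and yields \eqref{asymp_cdf}. The main obstacle is the second step: rigorously controlling the interchange of limit and expectation and confirming that neither a lower-order contribution from the ${}_1F_1$ factor nor the upper tail of $S$ disturbs the clean $z^{\mu}$ leading term, since it is precisely this term that pins the Weibull shape parameter to $\upsilon = \mu$.
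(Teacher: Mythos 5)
Your proposal is correct, but it takes a genuinely different route from the paper's. The paper converts the minimum into a maximum of negated variables, invokes the Fisher--Tippett theorem, and then verifies membership in the maximum domain of attraction of the reversed Weibull law through a von Mises-type sufficient condition, namely $\lim_{z \to 0} (-z) f_{\gamma}(-z)/F_{\gamma}(-z) = \mu$; evaluating that limit forces it to work with the exact closed-form PDF and CDF of the SIR in terms of Lauricella $E_D$ functions, expanding the multiple infinite series and tracking which terms survive as $z \to 0$. You instead stay with the minimum directly, via the regular-variation characterization of the min-domain of attraction at the finite left endpoint $x_0 = 0$, and you obtain the key fact $F_\gamma(z) \sim C z^{\mu}$ probabilistically: small-argument asymptotics of the $\kappa$-$\mu$ shadowed density for the numerator, then conditioning on the interference sum $S$ and dominated convergence. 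The two technical points you flag do go through: $\sup_{u>0} F_{|b|^2}(u)/u^{\mu} < \infty$ holds because that ratio tends to $c_0$ at the origin, is continuous, and tends to $0$ at infinity, and $\mathbb{E}[S^{\mu}] < \infty$ holds because every $\kappa$-$\mu$ shadowed RV has all positive moments (its density decays exponentially), so the interchange of limit and expectation is legitimate. Your closing computation $\left(1 - F_\gamma(a_K x)\right)^K \to \exp(-x^{\mu})$ also makes the convergence self-contained rather than citing an extreme-value theorem as a black box. What your route buys: it bypasses the Lauricella machinery entirely, makes transparent why the shape parameter is exactly $\mu$ (the numerator CDF vanishes like $x^{\mu}$ while the denominator contributes only the moment factor $\mathbb{E}[S^{\mu}]$), and generalizes immediately to any independent numerator/denominator pair with those two properties. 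What the paper's route buys: it exercises the exact SIR expressions that the rest of the paper needs anyway (e.g., to compute $a_K$ numerically), and the same series expansions are reused in the rate-of-convergence analysis of Theorem 2, so the two appendices share infrastructure.
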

	\begin{proof}
		Please refer to Appendix \ref{minima_lim_cdf} for the proof. 
	\end{proof}
	
	To evaluate $a_K$, an approximation of the CDF $F_\gamma(z)$ relying on the Lauricella function of the forth kind given by \cite[Eqn 8]{kumar2017outage} is used. Furthermore, \cite{kumar2017outage} gives bounds on the truncation error and shows that the CDF is well approximated by the proposed expression. Finally, the MATLAB code for evaluating Lauricella’s function of the fourth kind is available in \cite{lauricode}.\\
	Note that the above expression is simpler to evaluate than the actual CDF of the minimum as given in (\ref{actualcdf}). Fig.\ref{kus_inid1} shows the simulated and theoretical asymptotic CDF of minimum over $K=20$ SIR RVs for different system parameters. Here, cases 1, 2 and 3 correspond to the channel fading parameters as given in Table \ref{sim_cdf_param}. The results indicate that the asymptotic results are close to the true minimum distribution even for the cases where the minimum is evaluated over moderate-length sequences, such as $K=20$. 
	
	\begin{table}[H]
		
		\centering
		\begin{tabular}{|c|c|c|c|c|c|c|c|}
			\hline
			Case \# & $\kappa$ & $\mu$ & $m$ & $N$ & $\{\kappa_i\}$ & $\{\mu_i\}$ & $\{m_i\}$ \\ \hline
			1 & 2 & 3 & 1 & 3 & $\{ 2,2,2\}$ & $\{ 2,2,2\}$ & $\{ 1,1,1\}$ \\ \hline
			2 & 2 & 3 & 1 & 2 & $\{ 2,2\}$ & $\{ 2,1\}$ & $\{ 1,1\}$ \\ \hline
			3 & 2 & 2 & 1 & 1 & $\{ 2\}$ & $\{ 1\}$ & $\{ 1\}$ \\ \hline
		\end{tabular}
		\caption{Simulation parameters used for Fig.\ref{kus_inid1}.}
		\label{sim_cdf_param}
	\end{table}

	Further, to better quantify mathematically the decrease in gap between the theoretical and simulated values  as $K$ increases, we have derived the rate of convergence of the asymptotic minimum distribution to the corresponding Weibull distribution. We now give the rate of convergence for our case through the following theorem.
	\begin{theorem} \label{rate_cnvg}
		The rate of convergence of $F_{\gamma^{K}_{min}}(z)$ to the Weibull distribution is \\ $\mathcal{O}\left(K^{-\mu ^{-1}} +  K^{-1} \right)$ where $\gamma^{K}_{min}=\min \{\gamma_1,\cdots,\gamma_K\}$ .
	\end{theorem}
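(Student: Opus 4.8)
The plan is to bound the Kolmogorov distance $\sup_{z\ge 0}|F_{\gamma^{K}_{min}}(z)-F_{\gamma_{min}}(z)|$ and to show it is $\mathcal{O}(K^{-\mu^{-1}}+K^{-1})$. Writing each CDF through its complement via \eqref{actualcdf} and \eqref{asymp_cdf}, the quantity to be controlled is $|(1-F_\gamma(z))^{K}-\exp(-(z/a_K)^{\mu})|$. The key device is to insert the intermediate ``Poissonised'' term $\exp(-KF_\gamma(z))$ and apply the triangle inequality, so that
\begin{equation}
\left|(1-F_\gamma(z))^{K}-e^{-(z/a_K)^{\mu}}\right| \le \underbrace{\left|(1-F_\gamma(z))^{K}-e^{-KF_\gamma(z)}\right|}_{\text{(I)}} + \underbrace{\left|e^{-KF_\gamma(z)}-e^{-(z/a_K)^{\mu}}\right|}_{\text{(II)}}.
\label{plan_split}
\end{equation}
The two terms isolate the two independent sources of error, and I expect them to produce the $K^{-1}$ and $K^{-\mu^{-1}}$ contributions respectively.

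For term (I), I would set $u=F_\gamma(z)$ and use $\log(1-u)=-u-u^{2}/2-\cdots$ to get $(1-u)^{K}=\exp(-Ku-Ku^{2}/2-\cdots)$, whence $|(1-u)^{K}-e^{-Ku}|\le e^{-Ku}(1-e^{-Ku^{2}/2-\cdots})\le C\,Ku^{2}e^{-Ku}$ for $u$ bounded away from $1$. As $z$ ranges over $[0,\infty)$, $u$ ranges over $[0,1)$, and the map $u\mapsto Ku^{2}e^{-Ku}$ is maximised at $u=2/K$ with value $4e^{-2}/K$; values of $u$ near $1$ are harmless because $e^{-Ku}$ is then exponentially small. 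Hence term (I) is uniformly $\mathcal{O}(K^{-1})$.

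For term (II), I would use the Lipschitz bound $|e^{-a}-e^{-b}|\le|a-b|\,e^{-\min(a,b)}$ to obtain $(\mathrm{II})\le|KF_\gamma(z)-(z/a_K)^{\mu}|\,e^{-\min(KF_\gamma(z),\,(z/a_K)^{\mu})}$. This is where the near-origin behaviour of the ratio CDF enters: expanding the $\kappa$-$\mu$ shadowed density of $|b_k|^2$ about the origin gives $F_{|b_k|^{2}}(y)=\tilde c\,y^{\mu}(1+\mathcal{O}(y))$, and averaging over the interference sum $C=\sum_{j}|c_{j,k}|^{2}$ (whose moments are finite) yields $F_\gamma(z)=c\,z^{\mu}(1+\mathcal{O}(z))$. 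Combining this with the defining relation $F_\gamma(a_K)=1/K$, i.e.\ $c\,a_K^{\mu}=K^{-1}(1+\mathcal{O}(a_K))$, gives $KF_\gamma(z)=(z/a_K)^{\mu}(1+\mathcal{O}(z+a_K))$, so that $|KF_\gamma(z)-(z/a_K)^{\mu}|=(z/a_K)^{\mu}\,\mathcal{O}(z+a_K)$. On the relevant scale $z=\mathcal{O}(a_K)$ the prefactor $(z/a_K)^{\mu}$ is bounded and $\mathcal{O}(z+a_K)=\mathcal{O}(a_K)$; since $a_K\sim(cK)^{-1/\mu}\sim K^{-\mu^{-1}}$, this yields the $\mathcal{O}(K^{-\mu^{-1}})$ term. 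For large $z$ the growing prefactor is tamed by the factor $e^{-\min(\cdot)}$, since $s^{1+1/\mu}e^{-s}$ is bounded in $s=(z/a_K)^{\mu}$.

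The main obstacle is twofold and lives entirely in term (II). First, I must establish the second-order expansion $F_\gamma(z)=c\,z^{\mu}(1+\mathcal{O}(z))$ of the ratio CDF with explicit control of the remainder; this requires the lower-tail expansion of the $\kappa$-$\mu$ shadowed density (using the series of ${}_1F_1$ and $e^{-x/\theta}$ about $x=0$) together with the finiteness of $\mathbb{E}[C^{\mu+1}]$ to justify interchanging the expansion with the expectation over the interferers. Second, I must make the bounds in \eqref{plan_split} uniform over the whole half-line: the expansion is valid only for $z$ near $0$, so I would split the range into $z\le\delta$, where the expansion drives the rate, and $z>\delta$, where both $KF_\gamma(z)$ and $(z/a_K)^{\mu}$ diverge and both exponentials (hence their difference) are exponentially small. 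Assembling the two regimes and the two terms then gives the stated rate $\mathcal{O}(K^{-\mu^{-1}}+K^{-1})$; the argument parallels the Fr\'echet rate-of-convergence analysis of \cite{subhash2019asymptotic} for the maximum, with the roles of the upper and lower tails interchanged.
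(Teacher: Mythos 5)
Your proposal is correct in substance but takes a genuinely different route from the paper. The paper works with the density rather than the CDF: it expands the Lauricella-form PDF of $\hat\gamma=-\gamma$ about the endpoint $\omega(F)=0$ to exhibit it as a Weibull-GPD density times a factor $\left(1+\mathcal{O}\left(((-z)^{\mu})^{1/\mu}\right)\right)$, concludes that $F_{\hat\gamma}$ lies in the $\delta$-neighborhood $Q_2(\delta)$ of the generalized Pareto distribution $W_{2,\nu}$ with $\nu=\mu$ and $\delta=\mu^{-1}$, and then invokes the lemma of \cite{falk2010laws}, which immediately yields the rate $\mathcal{O}(K^{-\delta}+K^{-1})$ --- and does so in variational distance over Borel sets, which is stronger than the Kolmogorov distance you bound (though your reading is faithful to the theorem as stated). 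Your Poissonization-plus-triangle-inequality decomposition replaces that cited machinery with an elementary, self-contained computation, and it makes the origin of the two error terms transparent: $K^{-1}$ from $(1-u)^K$ versus $e^{-Ku}$, and $K^{-\mu^{-1}}$ from the second-order endpoint expansion. Note, however, that both proofs stand on the same analytic core --- the expansion $F_\gamma(z)=c\,z^{\mu}(1+\mathcal{O}(z))$, equivalently $f_\gamma(z)=C\,z^{\mu-1}(1+\mathcal{O}(z))$ near $z=0$ --- and that is where the real work lies in either route. The paper extracts it by series manipulation of the Lauricella function $E_D$ in the exact SIR density; you extract it from first principles by expanding the $\kappa$-$\mu$ shadowed numerator CDF and averaging over the interference sum, using its finite moments and exponential tails to control the remainder. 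Your version of this step is arguably cleaner and more illuminating (it explains why only the numerator's $\mu$, and not the interferers' $\mu_j$, enters the rate), but it obliges you to carry out the tail-splitting and uniformity arguments by hand, which you have correctly identified as the main obstacles; the paper offloads exactly that uniformity work to the cited lemma. In short: the paper's route buys a stronger metric and less computation at the price of heavier machinery and special-function series work; yours buys elementarity and transparency at the price of a weaker metric and the two technical steps you flagged, both of which are executable as you outlined.
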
 
	\begin{proof}
		Please refer to Appendix \ref{rate_covg} for the proof.
	\end{proof}
	From this result, we observe that the rate of convergence depends on the length of the sequence $K$ and the source fading parameter $\mu$. The simulated and theoretical distribution are expected to be closer for large values of $K$. Further, the convergence will be faster for smaller values of $\mu$, the number of multi paths in the source to desired receiver link. 
	
	\par Using this asymptotic distribution, we can now determine the optimum $P_s$, when the number of PU-Rxs $M$, is moderate to large. To evaluate the CDF of $\gamma_{min,p} := \lim\limits_{M \to \infty} \underset{m}{\text{min}} \left \lbrace  \frac{h_m}{\alpha_m};m=1,\cdots,M \right \rbrace$ (to approximate (\ref{min_prob})), we now substitute $N=1$, $K=M$, $(\kappa, \mu, m, \bar x) = (\kappa_p, \mu_p, m_p, \bar h_p)$, $(\kappa_1, \mu_1, m_1, \bar x_1) = (\kappa_{p,s}, \mu_{p,s}, m_{p,s}, \bar \alpha_{p,s})$, $a_K=a_M=F_{{\gamma}}^{-1}\left(\frac{1}{M} \right)$ and  $\upsilon=\mu_{p}$ in Theorem \ref{evt_main}\footnote{Here, $F_\gamma(z)$ is evaluated using (\ref{cdf2}) for $N=1$. Even if we consider multiple primary interferers, note that Theorem 1 gives the asymptotic distribution of the minimum SIR for a case where the receiver suffers from the interference of $N$ other transmitters. Therefore, the theoretical framework developed is applicable for a much broader framework.  However, when we consider $(N-1)$ primary interferers having known transmit powers, the expression of the outage probability will be different and we will not have a closed form expression for the secondary user's power allocation. Furthermore, in cells having large cell radius, the interference arising from other primary transmitters can be neglected due to the associated high path loss.}. Hence, we have  
	
	\vspace{1mm}
	\begin{equation}
	1- \exp \left(-\left( \frac{\gamma_0P_s}{P_pa_M}\right)^{\mu_{p}} \right) \leq p_0.
	\label{outage_codtn}
	\end{equation} 
	
	Further rearrangement of (\ref{outage_codtn}) gives, 
	
	\begin{equation}
	P_s \leq \frac{P_p a_M}{\gamma_0}\left[-ln (1-p_0) \right]^{1/\mu_{p}}. 
	\end{equation}
	The largest $P_s$ that satisfies the above constraint is given by
	\begin{equation}
	P_s^+ = \frac{P_p a_M}{\gamma_0}\left[-ln (1-p_0) \right]^{1/\mu_{p}}.
	\label{power_sec}
	\end{equation}
	Now, using (\ref{power_sec}) and (\ref{tx_policy2}), the optimal $P_s$ for the SU-Tx power policy is given by
	\begin{equation}
	\Bar{P_s} = min\{P_s^+,P_{s,max} \}.
	\label{su_policy_final}
	\end{equation} 
	\begin{figure}[H]
		\centering
		\includegraphics[scale=0.5]{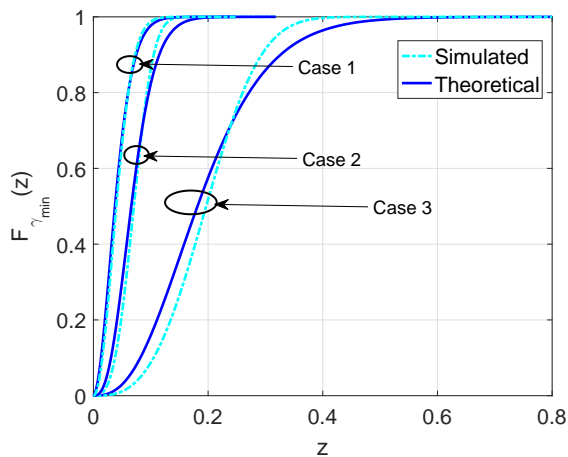}
		\caption{CDF of $\gamma^K_{min}$ (simulated) and $\gamma_{min}$ (using (\ref{asymp_cdf})) for different fading scenarios.}
		\label{kus_inid1}
	\end{figure}

	Now that we have derived the optimal SU-Tx power, we will analyze the impact of fading parameters on this power policy.
	From (\ref{su_policy_final}), we can observe that the optimum power at the SU-Tx, $\bar P_s$, is dependent on $P_s^+$ given in (\ref{power_sec}). The variations in $P_s^+$ are in turn governed by $P_P$, $\gamma_0$, $p_0$, $\mu_{p}$ and $a_M$.\\ 
	\textbf{\textit{Observation 1}}:
	From (\ref{power_sec}), it is plausible that an increase in either $P_p$ or $p_0$ or alternatively a decrease in $\gamma_0$ leads to an increase in $P_s^+$.\\
	
	The variation in $P_s^+$ with respect to the variations in the fading environment of the source can be studied by directly analyzing the variations in $a_M$ and $\mu_p$.
	However, the relationship between various fading parameters and $a_M$ is highly non-linear, therefore interpreting these variations with respect to changes in those parameters is difficult. One way to circumvent this problem is to use moment matching as in \cite{srinivasan2018secrecy}, and approximate each of the $\kappa-\mu$ shadowed RV as a gamma RV. \cite{subhash2019asymptotic} also uses similar approximation for analysis of their asymptotic maxima distribution. The $\kappa-\mu$ shadowed RV corresponding to the PU's fading coefficients with parameters ${(\kappa_p,\mu_p,m_p,\bar{h}_p)}$ can be approximated by a gamma RV having the shape parameter $\psi_1 = \frac{m_p\mu_p(1+\kappa_p)^2}{m_p+\mu_p\kappa_p^2+2m_p\kappa_p}$ and scale parameter $\psi_2=\frac{\bar{h}_p}{\psi_1}$. Similarly, the  $\kappa-\mu$ shadowed interferer (the interference from the SU-Tx) can also be  approximated as a gamma RV having shape parameter $\phi_1 = \frac{m_{p,s}\mu_{p,s}(1+\kappa_{p,s})^2}{m_{p,s}+\mu_{p,s}\kappa_{p,s}^2+2m_{p,s}\kappa_{p,s}}$ and scale parameter $\phi_2=\frac{\bar{\alpha}_{p,s}}{\phi_1}$. Hence, we have $\mathlarger{F_{\gamma}(z) = \mathbb{P}(\gamma \leq z)\approx \mathbb{P}\left( \frac{\boldsymbol{\Gamma}(\psi_1,\psi_2)}{\boldsymbol{\Gamma}(\phi_1,\phi_2)} \leq z \right)=\mathbb{P}\left(\frac{\boldsymbol{\Gamma}(\psi_1,1)}{\boldsymbol{\Gamma}(\phi_1,1)} \leq z \frac{\phi_2}{\psi_2} \right)}$, where $\mathlarger{\boldsymbol{\Gamma}(.,.)}$ represents a gamma distributed RV. This ratio of gamma RVs has a beta-prime CDF \cite{dubey1970compound} with parameters $\mathlarger{\psi_1}$ and $\mathlarger{\phi_1}$ evaluated at $\mathlarger{z\frac{\phi_2}{\psi_2}}$. Now, we use the theory of stochastic ordering to make inferences about the variations in $F_\gamma(z)$ with respect to variations in the channel fading conditions. The theory of stochastic ordering has been widely used by the statistical community to quantify the concept of one RV being greater than another or vice-versa. Furthermore, the authors of \cite{tepedelenlioglu2011applications,dhillon2013downlink,madhusudhanan2012stochastic,srinivasan2018secrecy} used stochastic ordering for the analysis of various wireless communication systems.\\ Here, we use the analysis in \cite{srinivasan2018secrecy} to make inferences about the approximate variation in $\mathlarger{F_{\gamma}(z)}$, with respect to the changes in $\kappa_p,\mu_p$, $m_p$, $\kappa_{p,s},\mu_{p,s}$ and $m_{p,s}$. {Although we have used the beta-prime approximation of the ratio of $\kappa-\mu$ shadowed RVs to provide an approximate analysis, using the same approximation to derive the minima distribution will be counterproductive. This is because, the CDF of the beta prime RV itself involves a Bessel function and the exact evaluation of the minimum using (\ref{actualcdf}) is still difficult. Secondly, if we try to derive the asymptotic  distribution using this approximate CDF, it will be less accurate due to approximations in computing $a_M$.}  Based on the analysis, we give the following observations:\\
	\textbf{\textit{Observation 2 } : $P_s^+$ increases upon increasing $\boldsymbol{\mu_p}$ or $\boldsymbol{m_p}$ or decreasing $\boldsymbol{\mu_{p,s}}$ or $\boldsymbol{m_{p,s}}$ .}\\
	Observe that, an increase in $\mu_p$ or $m_p$ results in an increase in $\psi_1$. According to $I4$ in Section III of \cite{srinivasan2018secrecy}, with an increase in $\psi_1$ along with a proportionate increase in $\bar{h}_p$, we can observe a reduction in $F_{\gamma}(z)$. Since the CDF is monotonically increasing function, to obtain the same CDF value of $\frac{1}{M}$ even after an increase in $\mu_p$ or $m_p$, the CDF evaluation point, which in our case is $a_M$, has to increase. Hence we infer from (\ref{power_sec}) that $P_s^+$ increases. \textbf{A similar argument can be made for a decrease in $\mu_{p,s}$ or $m_{p,s}$}.\\
	\textbf{\textit{Observation 3 } : $P_s^+$ increases upon increasing $\boldsymbol{\kappa_p}$ if $\boldsymbol{m_p-\mu_p \geq 0 }$ and decreases otherwise. Alternatively, $P_s^+$ increases upon decreasing $\boldsymbol{\kappa_{p,s}}$ if $\boldsymbol{m_{p,s}-\mu_{p,s} \geq 0 }$ and decreases otherwise.} \\
	The derivative of $\psi_1$ with respect to $\kappa_p$ is given by $\mathlarger{\frac{2\kappa_p(1+\kappa_p)m_p\mu_p(m_p-\mu_p)}{\left(m_p+2\kappa_p m_p+\kappa_p^2\mu_p \right)^2}}$. This shows that $\psi_1$ increases with an increase in $\kappa_p$ if $m_p-\mu_p>0$ and decreases otherwise. This in turn implies that the scale parameter $\mathlarger{F_{\gamma}^{-1}(M^{-1})}$ increases with an increase in $\kappa_p$, if $m_p-\mu_p > 0$ and decreases otherwise. Hence, following the same reasoning given in \textit{Observation 2}, we can infer that an increase in $\kappa_p$ increases $a_M$, if $m_p-\mu_p>0$, owing to the increase in $\psi_1$. Hence $P_s^+$ increases. Similarly, an increase in $\kappa_p$ results in an reduction of $a_M$, if $m_p-\mu_p<0$. Hence, $P_s^+$ decreases. Similarly, we can prove the opposite for change in $\kappa_{p,s}$.
	\par Thus \textit{Observation 2} and \textit{Observation 3} offers inferences on the variation of the maximum SU power $P_s^+$ with respect to the changes in the source and interferer fading environment. Furthermore, Table I of \cite{pozas_shadowed} summarizes the relationship between the $\kappa-\mu$ shadowed fading model and many common fading models, like Rayleigh, Rician, Nakagami etc. Hence, using these results we can analyze the variations for any specific fading environment as well.

	\section{Ergodic muticast rate of secondary users} \label{erg_cap}
	Here, multiple SU‐Rxs receive the same information from the SU‐Tx through a single radio transmission. Such multicast transmissions are useful for group‐based services such as audio‐video conferensing, disaster recovery, and military operations \cite{louis_ergodic}. The ergodic multicast rate of the secondary network is defined as \cite{ji2012capacity,louis_ergodic}
	\begin{equation}
	C_{sec}= L \times \mathbb{E}[\text{log}_2(1+ \underset{1 \leq l \leq L}{\text{min}} \gamma_{l,s})].
	\label{sec_min_cap}
	\end{equation}
	Substituting the expression for $\gamma_{l,s}$ from (\ref{sir_s}), we obtain
	\begin{equation}
	C_{sec}=  L \times \mathbb{E}[\text{log}_2(1+ \underset{1 \leq l \leq L}{\text{min }}\frac{P_s g_l}{P_p \beta_l})].
	\label{sec_min_cap1}
	\end{equation}
	Given that the CDF of the ratio of $\kappa-\mu$ shadowed RVs itself is complicated, it is a challenge to derive any simple expression for (\ref{sec_min_cap1}). Therefore, we propose the following theorem to evaluate the asymptotic ergodic multicast rate of SUs.
	\begin{theorem}\label{evt_rate_main}
		Consider $K$ i.i.d. SIR RVs of the form
		\begin{equation}
		\gamma_{k} = \frac{|b_{k}|^2}{\sum\limits_{j=1}^{N}|c_{j,k}|^2},
		\end{equation}
		where $\{|b_{k}|^2;1 \leq k \leq K, \}$  are i.i.d. $\kappa-\mu$ shadowed RVs with parameters $(\kappa, \mu, m, \bar x)$ and  $\{|c_{j,k}|^2; 1\leq j\leq N \}$ are i.n.i.d. $\kappa-\mu$ shadowed random variables, with parameters $(\kappa_j, \mu_j, m_j, \bar x_j)$ $\forall \ k$, for $j=1,..,N$. If $\gamma^K_{min}= min(\gamma_1, \gamma_2,..., \gamma_K)$, then
		\begin{equation}\label{ratelim}
		\lim\limits_{K \to \infty} \mathbb{E}[\text{log}_2(1+ \gamma_{min}^K)] = \mathbb{E}[{R}_{min}], 
		\end{equation}
		where $R_{min}=log_2(1+\gamma_{min})$ and $\gamma_{min}$ is the asymptotic distribution of $\gamma_{min}^K$ as given in Theorem (\ref{evt_main}).
	\end{theorem}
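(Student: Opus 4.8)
The plan is to upgrade the convergence in distribution established in Theorem~\ref{evt_main} to convergence of the expected rate. Writing $g(x)=\log_2(1+x)$, the object of interest is $\mathbb{E}[g(\gamma_{min}^K)]$, and since Theorem~\ref{evt_main} asserts that $\gamma_{min}^K$ converges in distribution to the Weibull variate $\gamma_{min}$, while $g$ is continuous and nondecreasing on $[0,\infty)$, the continuous mapping theorem immediately yields $g(\gamma_{min}^K)\xrightarrow{d} g(\gamma_{min})=R_{min}$. The entire content of \eqref{ratelim} is therefore the legitimacy of interchanging limit and expectation, i.e. $\lim_{K\to\infty}\mathbb{E}[g(\gamma_{min}^K)]=\mathbb{E}[\lim_{K\to\infty}g(\gamma_{min}^K)]$.

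First I would stress that weak convergence alone does \emph{not} force convergence of expectations when the test function is unbounded, which is exactly the situation for $g(x)=\log_2(1+x)$. The standard remedy is to establish uniform integrability of the family $\{g(\gamma_{min}^K)\}_{K}$: once both weak convergence and uniform integrability are in hand, the classical convergence criterion (weak convergence plus uniform integrability implies convergence of means, e.g. Billingsley) delivers $\mathbb{E}[g(\gamma_{min}^K)]\to\mathbb{E}[R_{min}]$, which is precisely \eqref{ratelim}. So the proof reduces to a single moment estimate.

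Second, to verify uniform integrability I would seek a uniform-in-$K$ bound on a slightly higher moment, namely some $\delta>0$ with $\sup_K \mathbb{E}[\,g(\gamma_{min}^K)^{1+\delta}\,]<\infty$. The natural tool is the pointwise domination $\gamma_{min}^K=\min(\gamma_1,\dots,\gamma_K)\le \gamma_1$, which together with monotonicity of $g$ gives $g(\gamma_{min}^K)^{1+\delta}\le g(\gamma_1)^{1+\delta}$ for every $K$; it then suffices that a single SIR variate has a finite $(1+\delta)$ logarithmic moment, $\mathbb{E}[(\log_2(1+\gamma_1))^{1+\delta}]<\infty$, which holds because the CCDF of the ratio of $\kappa-\mu$ shadowed variates decays fast enough that all logarithmic moments are finite. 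As an alternative that keeps everything $K$-indexed, one may instead write each expectation in the integrated-tail form $\mathbb{E}[g(X)]=\tfrac{1}{\ln 2}\int_0^\infty (1-F_X(z))/(1+z)\,dz$ and bound the difference of the exact and Weibull CCDFs by the rate-of-convergence estimate $\mathcal{O}(K^{-\mu^{-1}}+K^{-1})$ of Theorem~\ref{rate_cnvg}, splitting the integral into a bulk region and a tail region.

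The hard part will be the uniform integrability step, because $\log_2(1+x)$ is unbounded: one must rule out escape of mass through the upper tail uniformly in $K$, and some care is also needed near $z=0$, since the Weibull scale $a_K=F_\gamma^{-1}(1/K)\to 0$ (so that the identity is really the statement that the exact expected rate and its Weibull surrogate coincide asymptotically). In the integrated-tail route the same difficulty surfaces as the non-integrability of the weight $1/(1+z)$ over $[0,\infty)$, which forces the tail of the integral to be handled separately from the rate bound. Once the uniform moment bound is secured, the remaining steps are routine applications of the continuous mapping theorem and the weak-convergence-plus-uniform-integrability criterion.
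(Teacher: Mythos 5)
Your proposal is correct, and its first step coincides with the paper's: both invoke the continuous mapping theorem to upgrade $\gamma_{min}^K \xrightarrow{D} \gamma_{min}$ to $\log_2(1+\gamma_{min}^K) \xrightarrow{D} R_{min}$. Where you diverge is the interchange of limit and expectation. The paper exploits the monotone structure of minima directly: since $\gamma_{min}^K \geq \gamma_{min}^{K+1}$ pointwise, the CCDFs $\mathbb{P}(R_{min}^K > \omega)$ form a nonnegative, pointwise decreasing sequence, so writing $\mathbb{E}[R_{min}^K]=\int_0^\infty \mathbb{P}(R_{min}^K>\omega)\,d\omega$ and applying a monotone convergence theorem for decreasing sequences swaps the limit with the integral. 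You instead use the same monotonicity only to obtain the domination $\log_2(1+\gamma_{min}^K)\leq\log_2(1+\gamma_1)$, deduce uniform integrability of the family, and invoke the standard criterion that weak convergence plus uniform integrability yields convergence of means. Both arguments therefore rest on the same two pillars, and both secretly require $\mathbb{E}[\log_2(1+\gamma_1)]<\infty$: a point in your favor is that you state and justify this explicitly (the polynomial tail of the SIR makes $\log_2(1+\gamma_1)$ have exponentially decaying tail, hence all moments finite), whereas the paper leaves it implicit --- indeed, the paper's quoted convergence theorem for decreasing sequences is false without integrability of the first term (consider $g_n=\mathbf{1}_{[n,\infty)}$), so this finiteness is genuinely needed on their route as well. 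What the paper's route buys is brevity and self-containment, being tailored to minima; what yours buys is robustness, since uniform integrability does not require monotonicity in $K$ and the identical proof would survive for any statistic dominated by an integrable variate. Your closing caveat is also apt: the ``limit'' $\gamma_{min}$ carries the $K$-dependent scale $a_K$, an abuse of notation inherited from Theorem \ref{evt_main} that neither your argument nor the paper's resolves, so both proofs are best read as asserting that the exact expected multicast rate and its Weibull surrogate agree asymptotically.
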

	\begin{proof}
		Please refer to Appendix \ref{rate_convg} for the proof. 
	\end{proof}
	The expectation in (\ref{sec_min_cap}) can now be evaluated using the pdf of the Weibull RV, whose CDF is given in (\ref{asymp_cdf}),  after substituting $N=1$, $K=L$, $(\kappa, \mu, m, \bar x) := (\kappa_s,\mu_s,m_s, \bar g_s)$, $(\kappa_1, \mu_1, m_1, \bar x_1) := (\kappa_{s,p},\mu_{s,p},m_{s,p}, \bar \beta_{s,p})$, $a_K=a_L=F_{{\gamma}}^{-1}\left(\frac{1}{L} \right)$ and  $\upsilon=\mu_{s}$. The asymptotic minimum ergodic multicast rate of the SU's is therefore given by 
	\begin{equation}
	C_{sec} \approx  L \times \int\limits_{0}^{\infty} \log_2\left(1+\frac{P_s x }{P_p}\right) \frac{\upsilon}{a_L} \left(\frac{x}{a_L} \right)^{\upsilon-1} \exp\left(-\left(\frac{x}{a_L} \right)^\upsilon \right) dx.
	\label{sec_min_cap_eval}
	\end{equation}
	To analyze the above expression with respect to $a_L$, we propose the following lemma.
	\begin{lemma}\label{weibulllemma1}
		Consider two Weibull RVs $P$ and $Q$ with parameters $\mathlarger{\{a_{L1}, \upsilon\}}$ and $\mathlarger{\{a_{L2}, \upsilon\}}$ respectively. $P$ is stochastically larger than $Q$ if
		\begin{equation}
		\mathbb P(P < z) < \mathbb P(Q <z), \ \forall z >0. 
		\end{equation}
		In other words, $\mathlarger{P>_{st}Q}$ if
		\begin{equation}
		1- exp\left(-\left( \frac{z}{a_{L1}}\right)^{\upsilon}\right) < 1-exp\left(-\left( \frac{z}{a_{L2}}\right)^{\upsilon}\right).
		\end{equation}
		The above condition is achieved when $\mathlarger{a_{L1} \geq a_{L2}}$.
		Also, if $\mathlarger{P>_{st}Q}$, $\mathbb{E}[u(P)] \geq \mathbb{E}[u(Q)]$ for any non-decreasing function $u$ \cite{shaked}.  
	\end{lemma}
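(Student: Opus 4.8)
The plan is to prove the lemma in two stages: first to show that the hypothesis $a_{L1} \geq a_{L2}$ yields the pointwise distribution-function ordering asserted, and then to deduce the expectation inequality from the standard theory of stochastic ordering.

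First I would pin down the operative definition. The usual stochastic order $P >_{st} Q$ means $\mathbb{P}(P \leq z) \leq \mathbb{P}(Q \leq z)$ for all $z$, equivalently that the survival function of $P$ dominates that of $Q$ everywhere. For the two Weibull variables with common shape $\upsilon$ and scales $a_{L1}, a_{L2}$, this is exactly the CDF comparison exhibited in the statement, so it suffices to verify that single inequality.

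The core step is then a short monotonicity argument. For any $z > 0$, the assumption $a_{L1} \geq a_{L2}$ gives $z/a_{L1} \leq z/a_{L2}$; since $\upsilon > 0$, the map $t \mapsto t^{\upsilon}$ is increasing on $(0,\infty)$, whence $(z/a_{L1})^{\upsilon} \leq (z/a_{L2})^{\upsilon}$. Negating reverses the inequality, and applying the increasing function $\exp(\cdot)$ preserves it, giving $\exp(-(z/a_{L1})^{\upsilon}) \geq \exp(-(z/a_{L2})^{\upsilon})$; subtracting from unity produces $1 - \exp(-(z/a_{L1})^{\upsilon}) \leq 1 - \exp(-(z/a_{L2})^{\upsilon})$, i.e. $\mathbb{P}(P \leq z) \leq \mathbb{P}(Q \leq z)$. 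Since this holds for every $z > 0$ (and trivially for $z \leq 0$, where both CDFs vanish), I conclude $P >_{st} Q$.

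Finally, the expectation statement follows directly from the defining characterization of the usual stochastic order: whenever $P >_{st} Q$, one has $\mathbb{E}[u(P)] \geq \mathbb{E}[u(Q)]$ for every non-decreasing $u$ for which both expectations exist, which is precisely the result quoted from \cite{shaked}. I do not anticipate any genuine difficulty here; the only point demanding care is bookkeeping the direction of the inequalities as they pass through the negation and the exponential, since the stochastic order between $P$ and $Q$ is reversed relative to the pointwise order between their distribution functions. This sign-tracking is the sole subtle point in an otherwise routine argument.
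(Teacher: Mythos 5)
Your proof is correct and takes essentially the same route as the paper, which in fact leaves the argument implicit in the lemma statement itself: a pointwise comparison of the two Weibull CDFs via monotonicity of $t \mapsto t^{\upsilon}$ and of $\exp(\cdot)$, followed by the standard characterization of the usual stochastic order quoted from \cite{shaked}. Your only deviation is a small improvement in rigor: you correctly keep the inequalities weak, since when $a_{L1} = a_{L2}$ the two CDFs coincide and the paper's strict ``$<$'' cannot hold.
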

	Note that the logarithm function is non-decreasing. Therefore, from the above lemma, we can conclude that the ergodic rate increases with the increase of $a_L$. Hence, we can make the following observations,  by following arguments similar to those made in \textit{Observation 2} and \textit{Observation 3}.\\
	\textbf{\textit{Observation 4 } : $C_{sec}$ increases upon increasing $\boldsymbol{m_s}$ or decreasing $\boldsymbol{m_{s,p}}$ or $\boldsymbol{\mu_{s,p}}$.}\\
	\textbf{\textit{Observation 5 } : $C_{sec}$ increases upon increasing $\boldsymbol{\kappa_{s}}$ if $\boldsymbol{m_{s}-\mu_{s} \geq 0 }$ and decreases otherwise. Alternatively, $C_{sec}$ increases upon decreasing $\boldsymbol{\kappa_{s,p}}$ if $\boldsymbol{m_{s,p}-\mu_{s,p} \geq 0 }$ and decreases otherwise.} \\
	\textbf{\textit{Observation 6 } : Also, $C_{sec}$ is directly proportional to $P_s^+$. Hence, variation in $C_{sec}$ with respect to the variations in the fading channel of the primary network can be directly extended from \textit{Observation 2} and \textit{Observation 3}.}

	\section{Numerical results and Simulations}
	
	In this section we present simulations to validate the results and observations from the previous sections. The PU-Tx's target rate is chosen to be $R_{0}=\log(1+\gamma_0)=0.03$ bps/Hz for all the simulations. This is to match the performance target for the operational long-term evolution (LTE) network, which requires the cell edge user throughput to be higher than $0.02$ bps/Hz/cell/user\cite{louis_ergodic,itu2008requirements,sesia2011lte}. Similarly, all the results are generated for the choice of $P_{s,max}=20 \ dB$. Here, Fig. \ref{sec_alloc_po} and \ref{sec_alloc_pp} show the SU-Tx power allocation for various combinations of PU-Tx power $P_p$ and PU-Rx outage constraint $p_0$. Furthermore, we have chosen  $(\kappa_p=3,\mu_p=2,m_p=1)$, $(\kappa_{p,s}=2,\mu_{p,s}=2,m_{p,s}=1)$, $(\kappa_s=2,\mu_s=2,m_s=1)$, $(\kappa_{s,p}=3,\mu_{s,p}=3,m_{s,p}=1)$, $M=10$ and $L=10$ for generating Figs. \ref{sec_alloc_po}-\ref{sec_alloc_out_sec_pp}. The results indicate that the optimum SU-Tx power $\bar{P_s}$ increases upon increasing the PU-Tx power $P_p$. This is because, upon increasing $P_p$, the PU-Rxs become capable of handling a higher interference arriving from the SU-Tx at the same outage constraints. Furthermore, for constant $P_p$, $\bar{P_s}$ decreases with a reduction in $p_0$. This is because, a reduction in $p_0$ results in stricter outage constraints at the PU-Rxs.  In order to satisfy these stricter reliability conditions, the SU-Tx has to transmit at a lower power for reducing the interference at the PU-Rx. Note that the optimum transmit power $\bar{P_s}$ is always limited by $P_{s,max}$.  For the power allocation considered in Fig. \ref{sec_alloc_po}, we show the outage probabilities of both the primary and of the secondary receiver having lowest SIR in Fig. \ref{sec_alloc_out_pri} and \ref{sec_alloc_out_sec} respectively. Similarly, in Fig. \ref{sec_alloc_out_pri_pp} and \ref{sec_alloc_out_sec_pp} we show the outage probabilities of the primary and secondary receiver having the lowest SIR for the power allocation considered in Fig. \ref{sec_alloc_pp}. From, Fig. \ref{sec_alloc_out_pri}-\ref{sec_alloc_out_sec_pp} we observe that for a constant value of $p_0$, the secondary user power allocation ensures that the outage of the PU-Rx and SU-Rx having the lowest SIR remains constant with respect to $P_p$. However, note that we are not constraining the outage probability of the secondary users in the allocation scheme and hence the probability of outage of the secondary users may change with the channel conditions or system model.
	
	\begin{figure}[H]
		\centering
		\begin{minipage}[t]{0.45\textwidth}
			\includegraphics[scale=0.48]{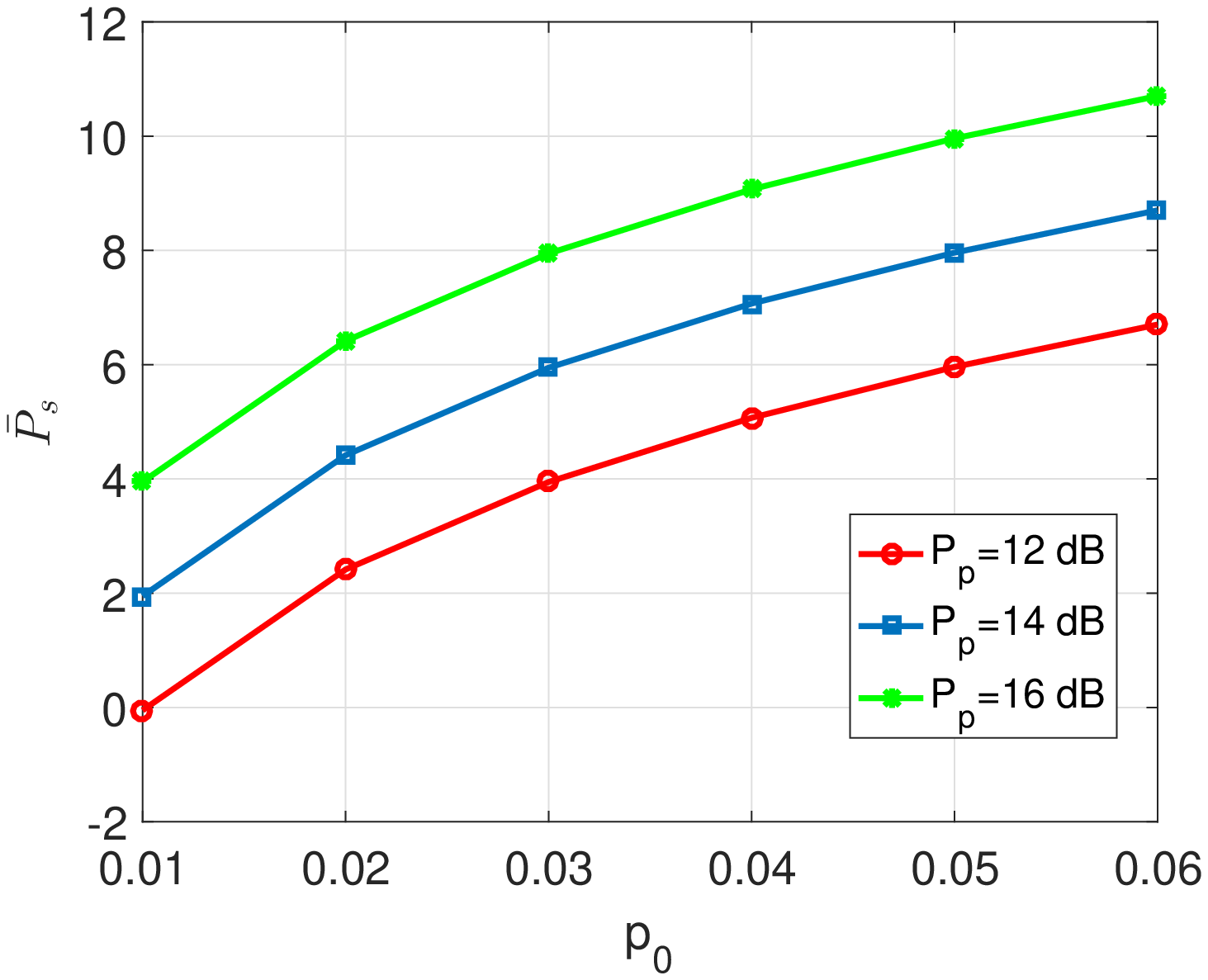}
			\caption{$p_0$ vs $\Bar{P}_s$ for M=10 and L=10. }
			\label{sec_alloc_po}	
		\end{minipage}
		\begin{minipage}[t]{0.45\textwidth}
			\includegraphics[scale=0.48]{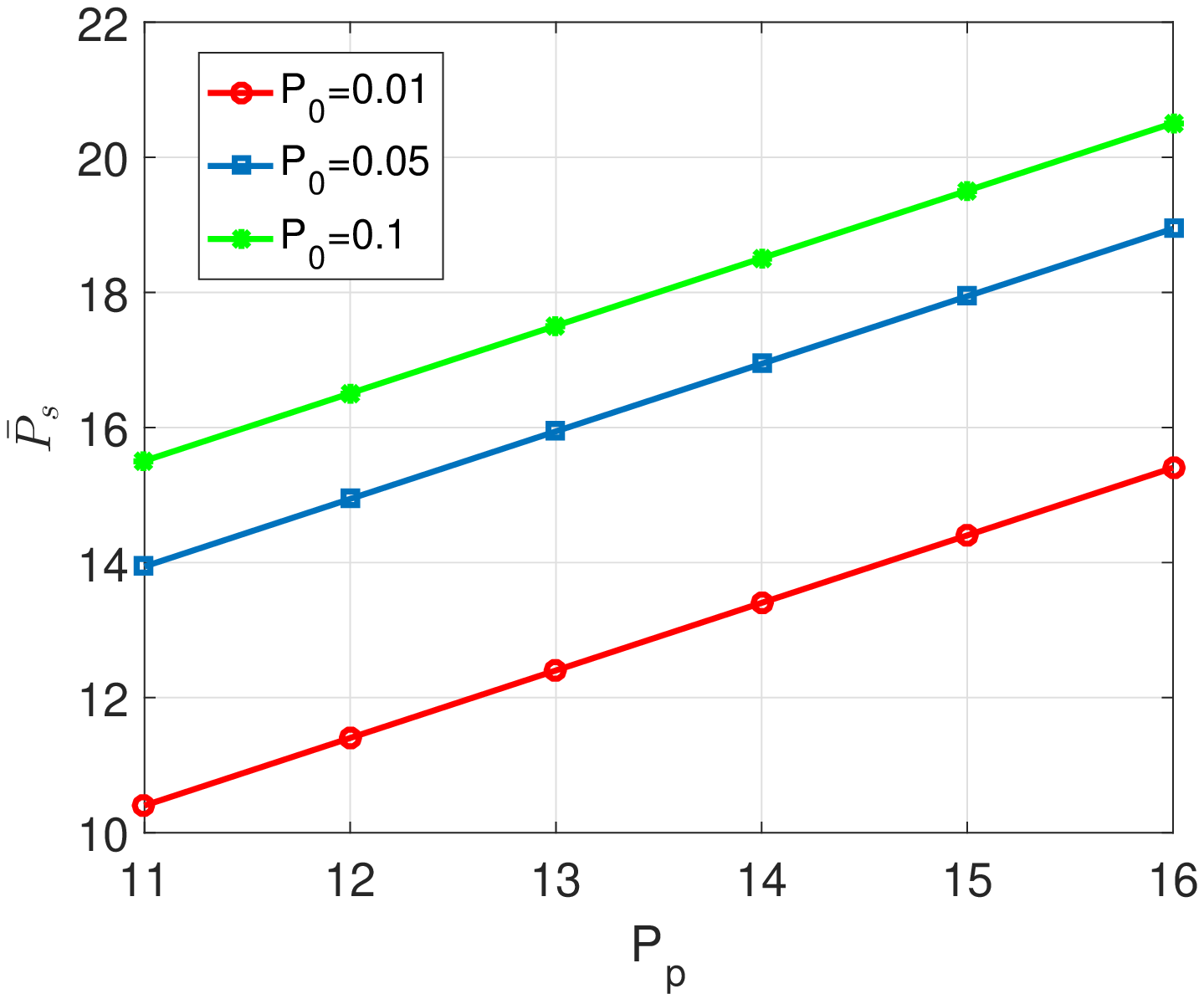}
			\caption{$P_p$ vs $\Bar{P}_s$ for M=10 and L=10. }
			\label{sec_alloc_pp}
		\end{minipage}%
	\end{figure}

	\begin{figure}[H]
		\centering
		\begin{minipage}[t]{0.45\textwidth}
			\includegraphics[scale=0.5]{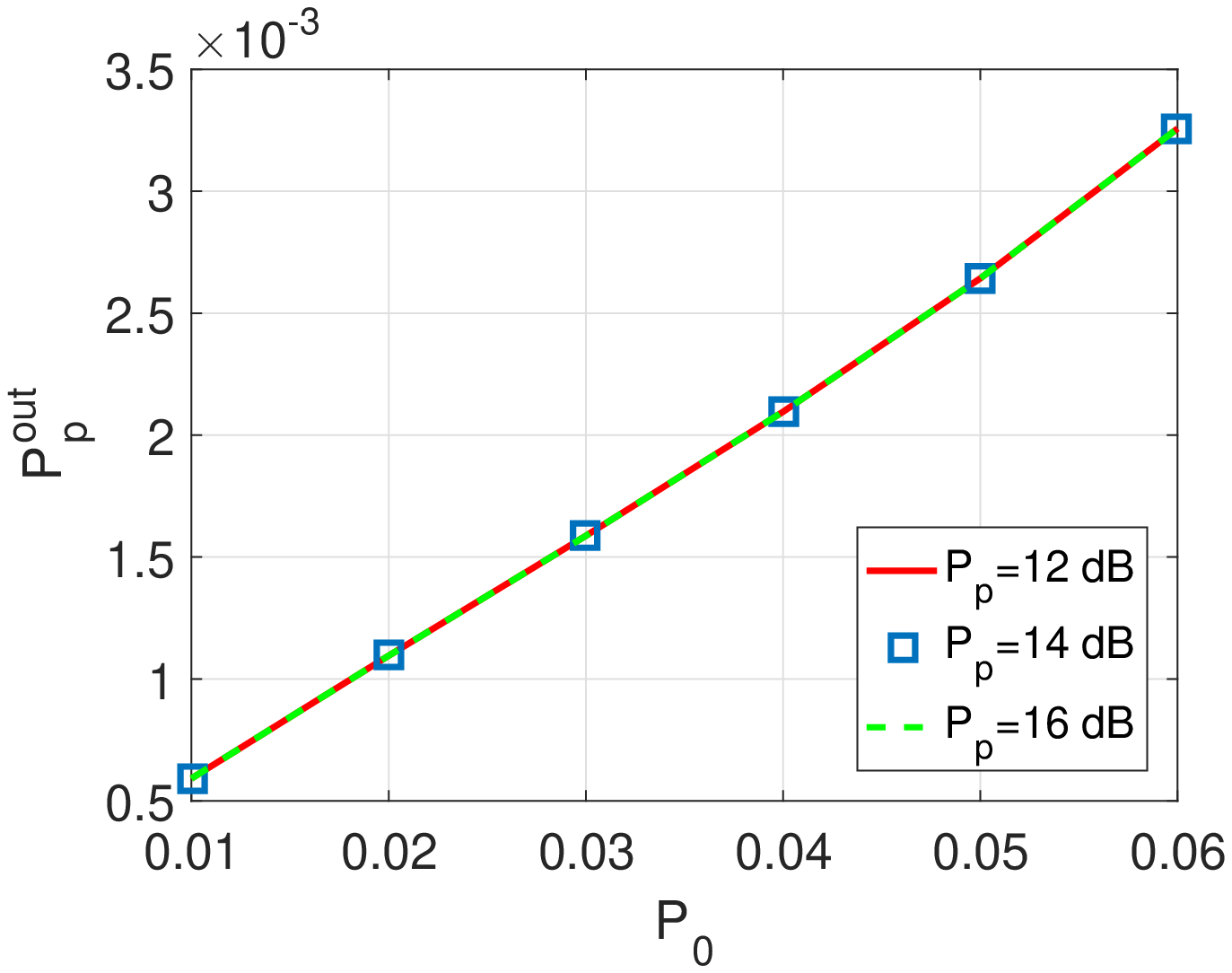}
			\caption{$p_o$ vs outage probability of minimum SIR primary user. }
			\label{sec_alloc_out_pri}
		\end{minipage}
		\begin{minipage}[t]{0.45\textwidth}
			\includegraphics[scale=0.5]{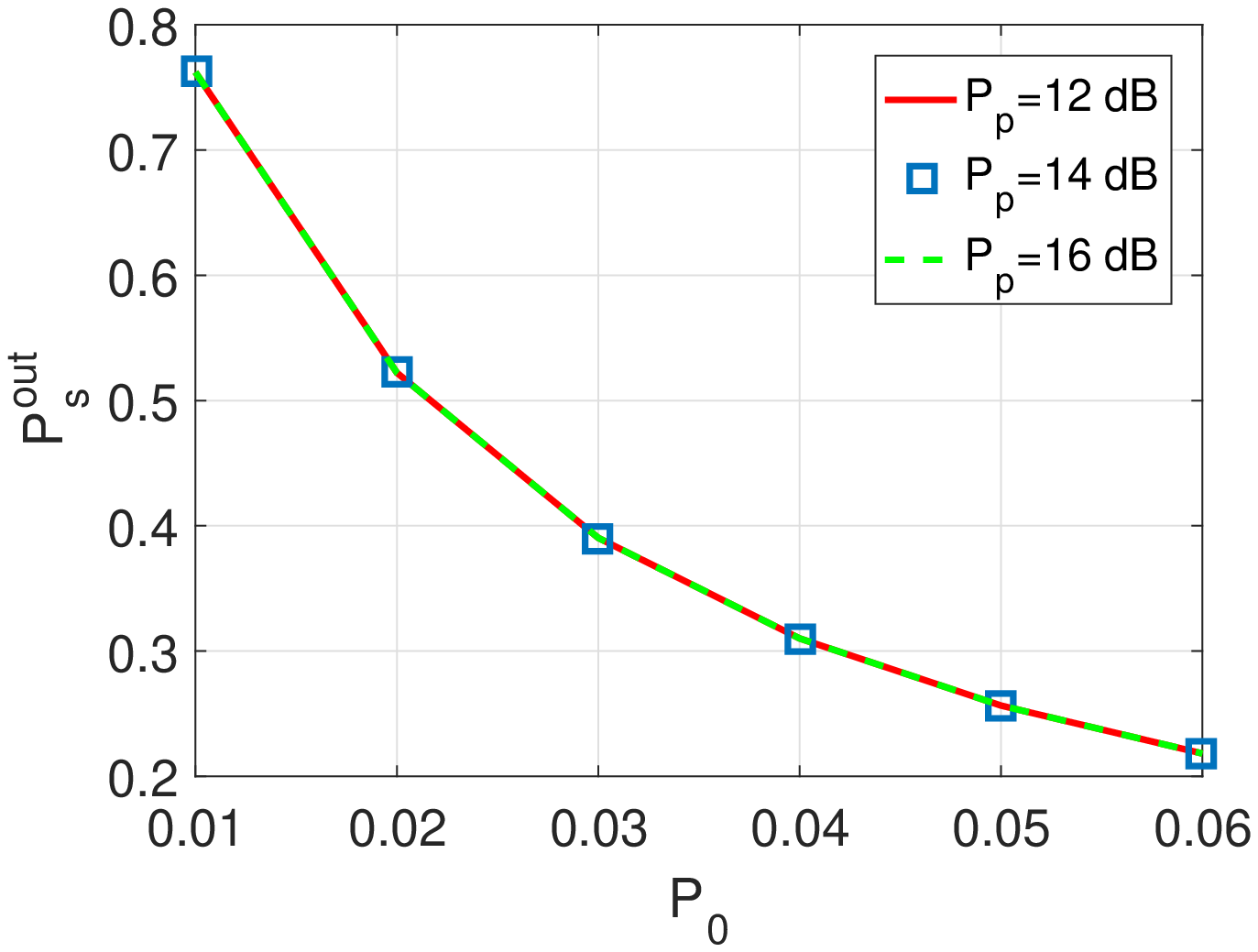}
			\caption{$P_p$ vs  outage probability of minimum SIR primary user. }
			\label{sec_alloc_out_sec}
		\end{minipage}%
	\end{figure}
	
	\begin{figure}[H]
		\centering
		\begin{minipage}[t]{0.45\textwidth}
			\includegraphics[scale=0.5]{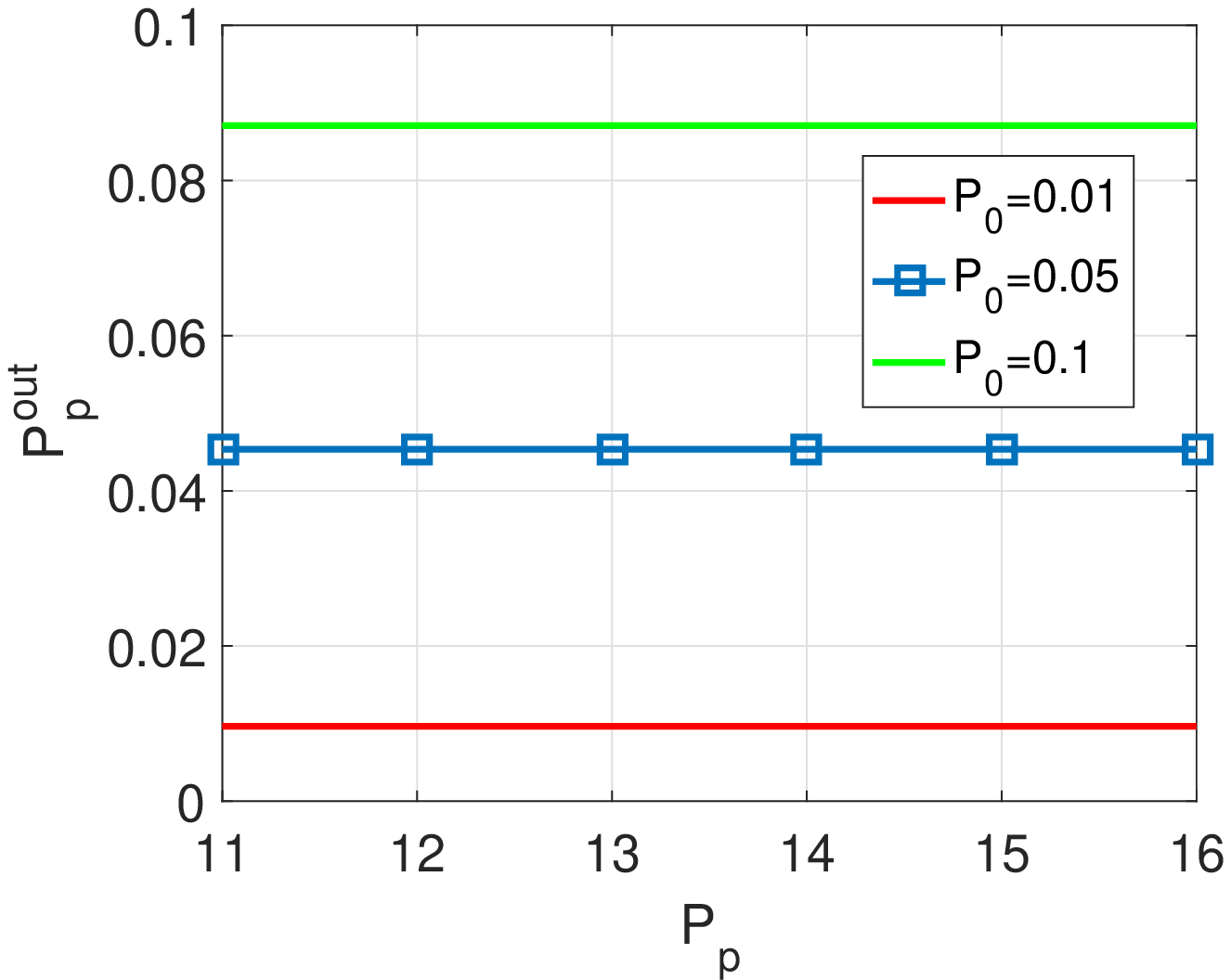}
			\caption{$p_0$ vs  outage probability of minimum SIR secondary user. }
			\label{sec_alloc_out_pri_pp}
		\end{minipage}
		\begin{minipage}[t]{0.45\textwidth}
			\includegraphics[scale=0.5]{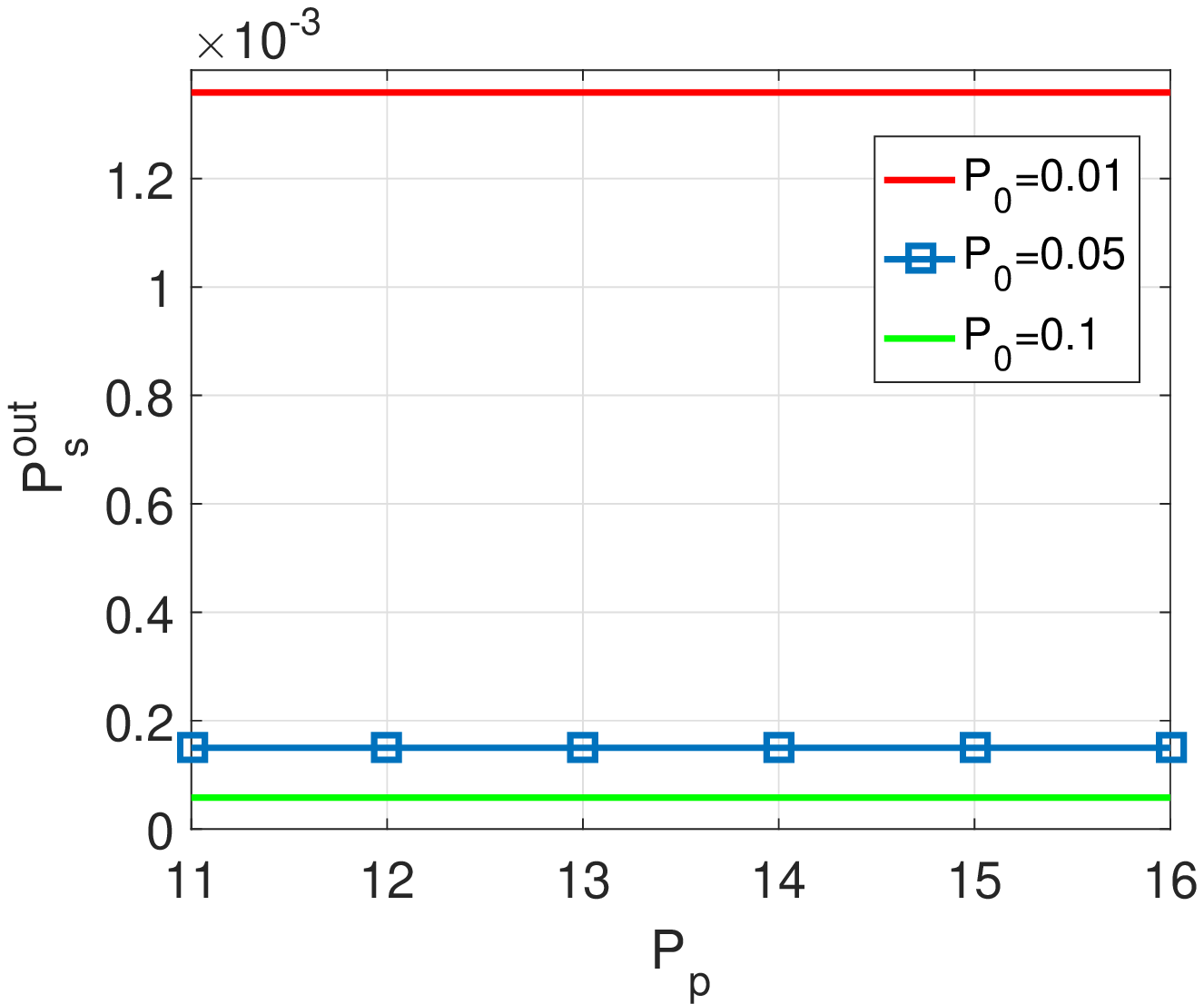}
			\caption{$P_p$ vs outage probability of minimum SIR secondary user. }
			\label{sec_alloc_out_sec_pp}
		\end{minipage}%
	\end{figure}

	\begin{figure}[H]
		\centering
		\begin{minipage}[t]{0.45\textwidth}
			\includegraphics[scale=0.5]{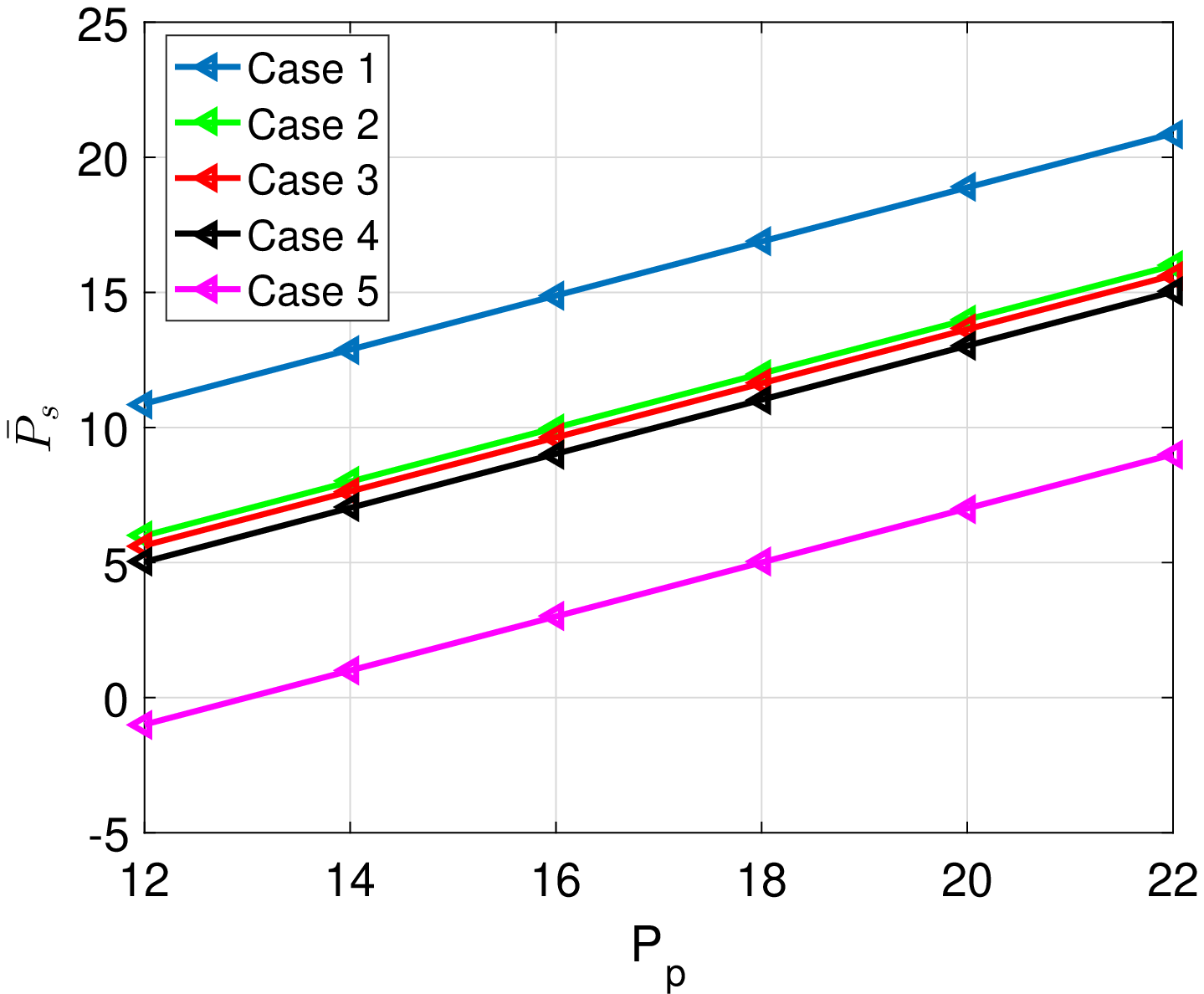}
			\caption{$P_p$ vs $\bar{P_s}$ for M=20.}
			\label{ps_vary}
		\end{minipage}
		\begin{minipage}[t]{0.45\textwidth}
			\includegraphics[scale=0.5]{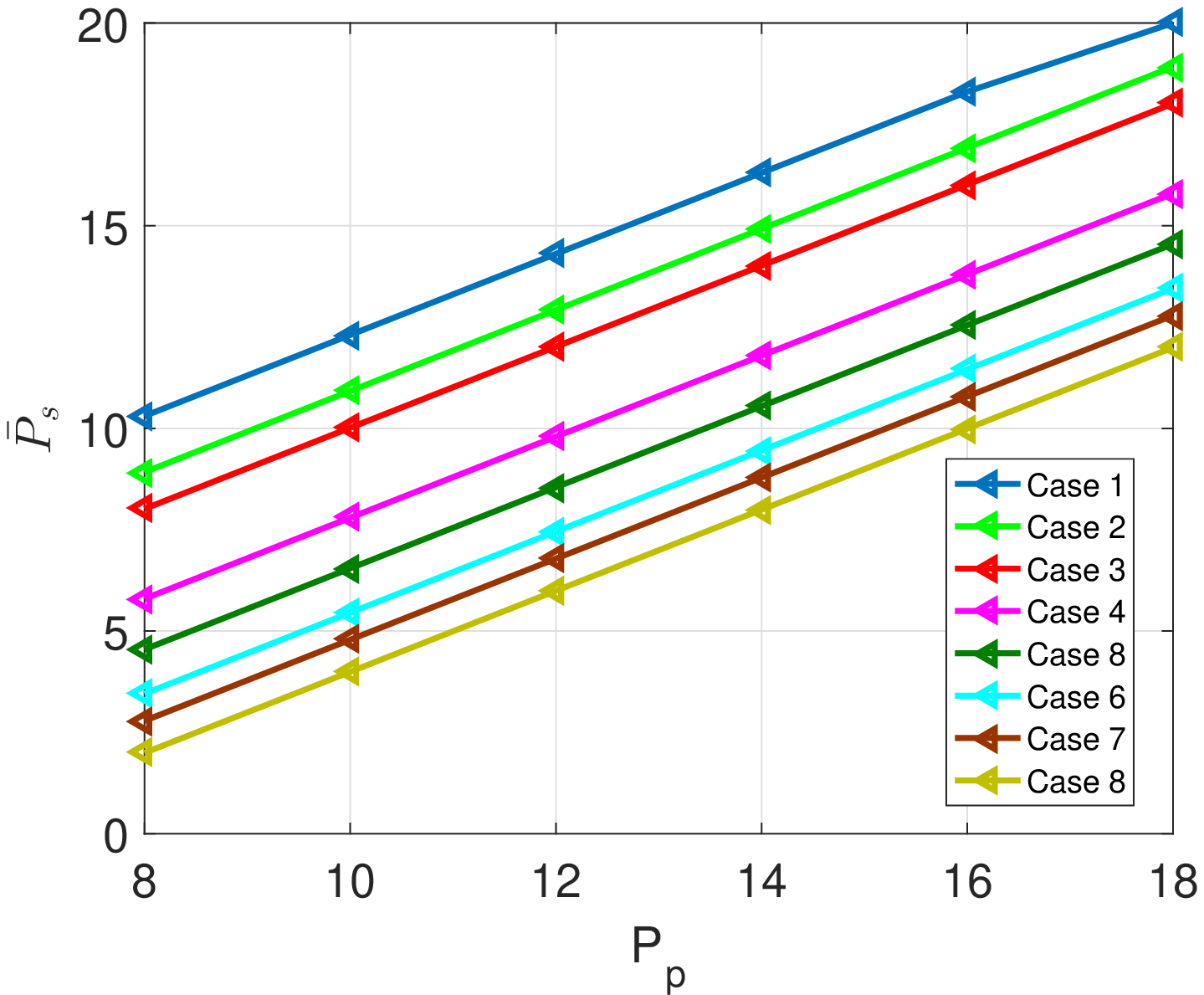}
			\caption{$P_p$ vs $\bar{P_s}$ for M=20.}
			\label{ps_vary_k_p_kps}
		\end{minipage}%
	\end{figure}
	
	Fig. \ref{ps_vary} shows plots of $P_p$ versus $\bar{P_s}$ for different channel conditions to validate \textit{Observation 2}. The channel parameters corresponding to the cases shown in the figure are given in Table \ref{observ_2_tb1}. From Cases 1,2 and 5 we can observe an increase in $\bar{P_s}$ with an increase in $\mu_p$ and $m_p$. Similarly, we can observe an decrease in $\bar{P_s}$ with an increase in $\mu_{p,s}$ and $m_{p,s}$ from cases 2,3 and 4. Next, \textit{Observation 3} is validated using simulations in Fig. \ref{ps_vary_k_p_kps}. The channel fading fading parameters used for simulation are given in Table \ref{observ_2_tb2}. According to \textit{Observation 3}, variation in $\bar{P_s}$ with changes in $\kappa_p$ or $\kappa_{p,s}$ depends upon the sign of $\mu_p-m_p$ and $\mu_{p,s}-m_{p,s}$ respectively. We verify all such variations possible using cases 1-8 in Fig. \ref{ps_vary_k_p_kps}. 
	The above figures validate the claim that the proposed asymptotic results can be used to derive inferences on the system performance easily. Without the proposed simple distribution for the minimum SIR RV, predicting the changes in the underlay CRN performance with respect to variations in channel fading conditions would have been non trivial.
	
	\begin{table}[H]
		\centering
		\begin{minipage}[b]{0.5\textwidth}
			\begin{tabular}{|c|c|c|c|c|c|c|}
				\hline
				Case \# & $\kappa_p$ & $\mu_p$ & $m_p$ & $\kappa_{p,s}$ & $\mu_{p,s}$ & $m_{p,s}$ \\ \hline
				1       & 3          & 2       & 1     & 2              & 1           & 1         \\ \hline
				2       & 3          & 1       & 1     & 2              & 1           & 1         \\ \hline
				3       & 3          & 1       & 1     & 2              & 20          & 1         \\ \hline
				4       & 3          & 1       & 1     & 2              & 1           & 10        \\ \hline
				5       & 3          & 1       & 0.1   & 2              & 1           & 1         \\ \hline
			\end{tabular}
			\caption{Channel parameters used for simulation of Fig.\ref{ps_vary}.}
			\label{observ_2_tb1}
		\end{minipage}
		\begin{minipage}[b]{0.5\textwidth}
			\begin{tabular}{|c|c|c|c|c|c|c|}
				\hline
				Case \# & $\kappa_p$ & $\mu_p$ & $m_p$ & $\kappa_{p,s}$ & $\mu_{p,s}$ & $m_{p,s}$ \\ \hline
				1       & 3          & 2       & 1     & 2              & 1           & 1         \\ \hline
				2       & 10         & 2       & 1     & 2              & 1           & 1         \\ \hline
				3       & 3          & 1       & 10    & 2              & 1           & 2         \\ \hline
				4       & 3          & 1       & 10    & 10             & 1           & 2         \\ \hline
				5       & 10         & 1       & 2     & 2              & 1           & 1         \\ \hline
				6       & 3          & 1       & 2     & 2              & 1           & 1         \\ \hline
				7       & 3          & 1       & 1     & 30             & 2           & 1         \\ \hline
				8       & 3          & 1       & 1     & 2              & 2           & 1         \\ \hline
			\end{tabular}
			\caption{Channel parameters used for simulation of Fig.\ref{ps_vary_k_p_kps}}
			\label{observ_2_tb2}
		\end{minipage}%
	\end{table}
	
	Next, in Figs. \ref{cap_1_pri}-\ref{cap_k_vary_sp} we compare the simulated and theoretical values of the ergodic multicast rate of secondary users under different channel conditions. We had chosen $P_p=14$ dB and $p_0=0.1$ for all these figures.  Fig. \ref{cap_1_pri} shows the variation in $C_{sec}/L$ with respect to variation in number of primary users $M$, for two different values of $p_0$. The channel fading parameters chosen for simulation are given in Table. \ref{observ_4_tb1}. \textit{Observations} 4-5 are verified via simulation in Figs. \ref{cap_mu_m_vary}-\ref{cap_k_vary_sp}.  Fig. \ref{cap_mu_m_vary} gives simulations to validate \textit{Observation} 4 and the corresponding channel parameters are given in Table \ref{obs_4}.  Comparing, Fig. \ref{cap_1_pri} and Fig. \ref{cap_mu_m_vary} we observe that the convergence of the simulated value of minimum rate to the asymptotic value proposed is slower for larger values of $\mu_s$. This agrees with our rate of convergence results discussed in Theorem  \ref{rate_cnvg}. Similarly, \textit{Observation} 5 is verified in Fig. \ref{cap_k_vary_3} and \ref{cap_k_vary_sp}. The corresponding channel parameters are given in Table \ref{observ_4_tb1}.  The difference between values of $C_{sec}/L$ for different values of $\kappa_s$ and $\kappa_{s,p}$ is not very large and hence for clarity of figures, we include only the simulated curves for minimum secondary capacity.

	\begin{figure}[H]
		\centering
		\begin{minipage}[t]{0.45\textwidth}
			\includegraphics[scale=0.5]{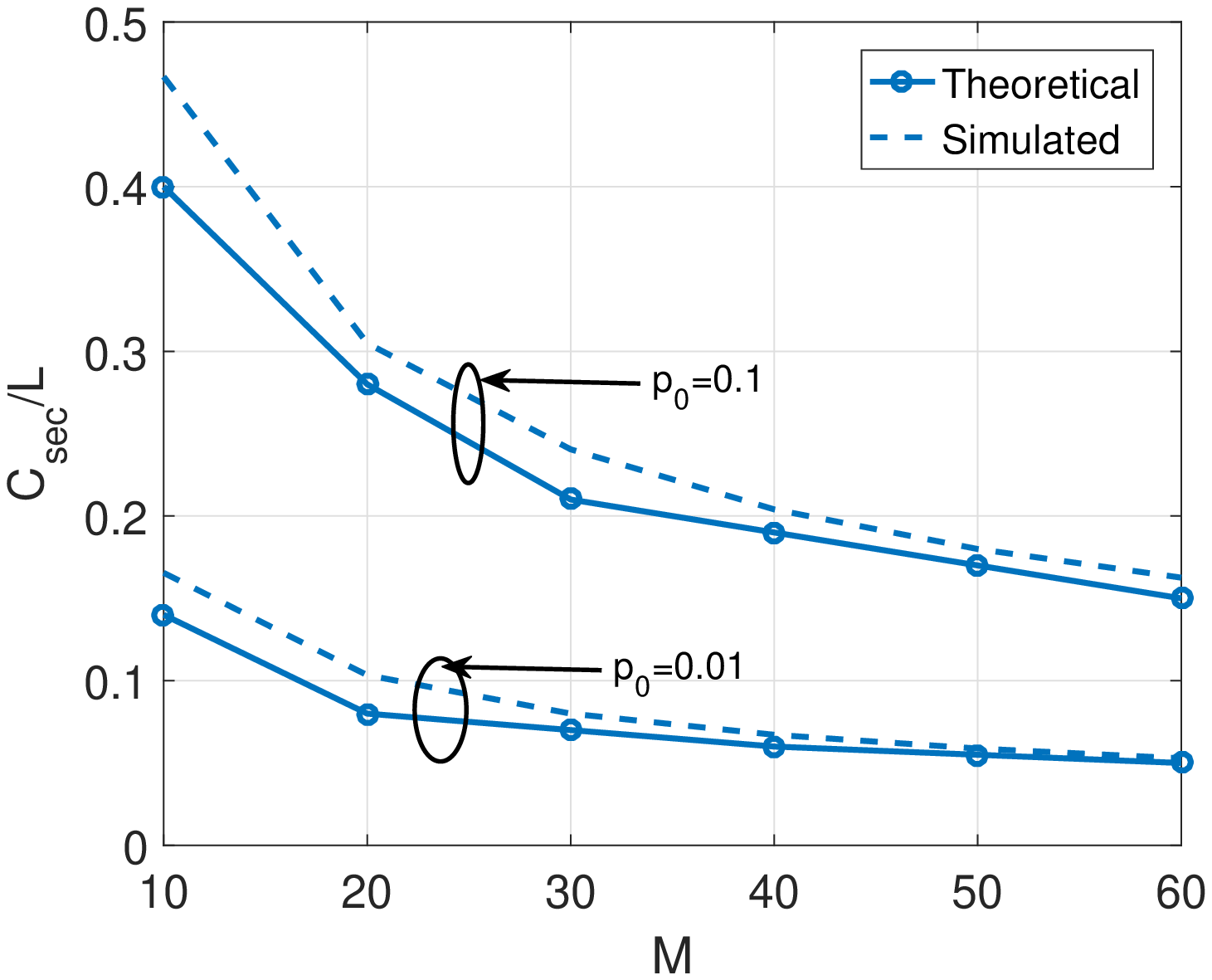}
			\caption{M vs $C_{sec}/L$. }
			\label{cap_1_pri}	
		\end{minipage}
		\begin{minipage}[t][][t]{0.45\textwidth}
			\includegraphics[scale=0.45]{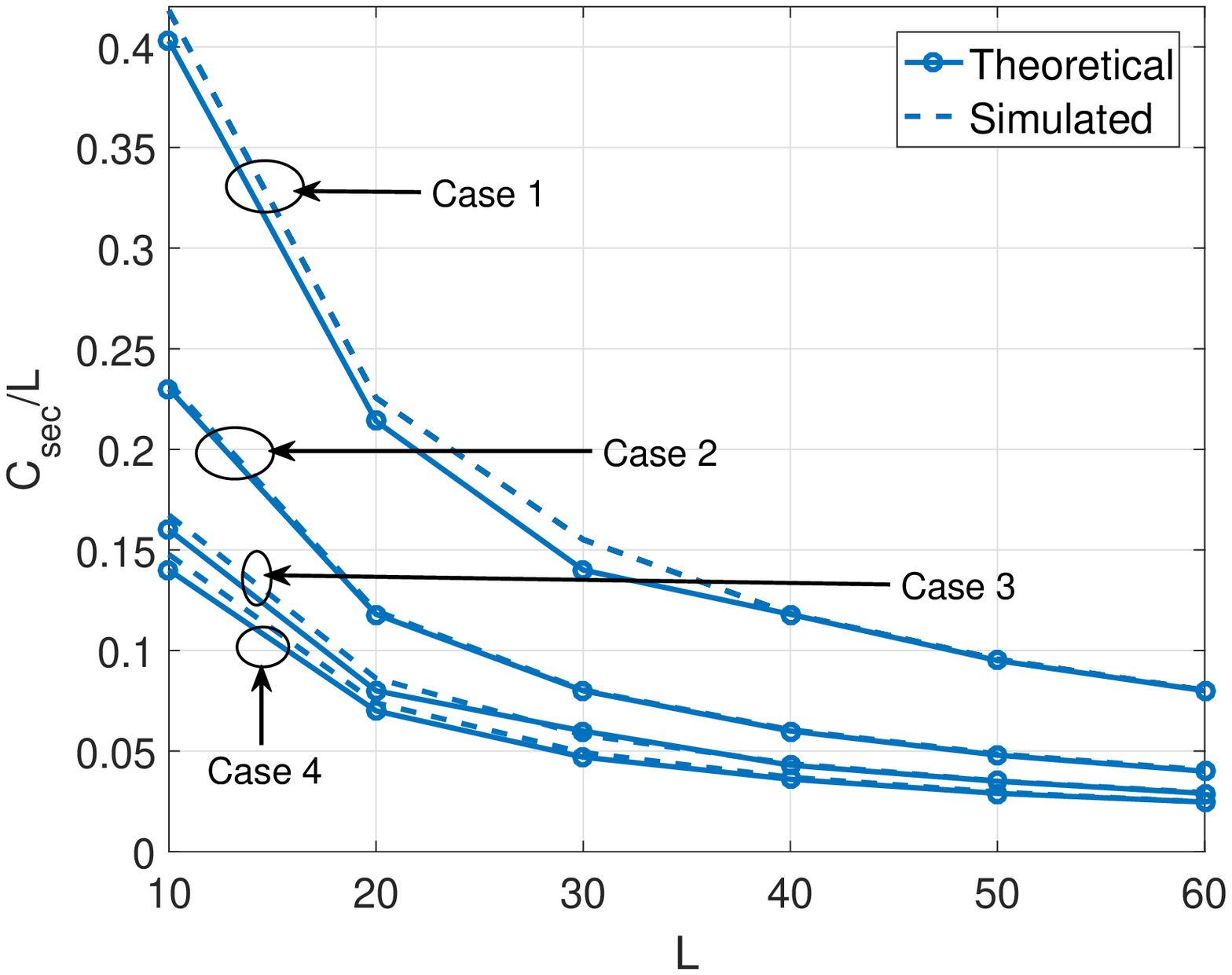}
			\caption{L vs $C_{sec}/L$. }
			\label{cap_mu_m_vary}	
		\end{minipage}
	\end{figure}

	\begin{table}[H]
		\centering
		\begin{minipage}[b]{0.55\textwidth}
			\begin{tabular}{|c|c|c|c|c|c|c|c|}
				\hline
				Fig \# & $\kappa_p$ & $\mu_p$ & $m_p$ & $\kappa_{p,s}$ & $\mu_{p,s}$ & $m_{p,s}$ & $M$ \\ \hline
				10 & 2 & 3 & 1 & 2 & 2 & 1 & - \\ \hline
				12 & 3 & 2 & 1 & 2 & 1 & 10 & 20 \\ \hline
				13 & 3 & 2 & 1 & 2 & 1 & 10 & 20 \\ \hline
				Fig \# & $\kappa_s$ & $\mu_s$ & $m_s$ & $\kappa_{s,p}$ & $\mu_{s,p}$ & $m_{s,p}$ & $L$ \\ \hline
				10 & 2 & 2 & 1 & 3 & 3 & 1 & 20 \\ \hline
				12 & - & - & - & 2 & 1 & 1 & - \\ \hline
				13 & 3 & 1 & 1 & - & - & - & - \\ \hline
			\end{tabular}
			\caption{Channel parameters used for simulation of Fig.\ref{cap_1_pri},\ref{cap_k_vary_3} and \ref{cap_k_vary_sp}}
			\label{observ_4_tb1}
		\end{minipage}
		\begin{minipage}[b]{0.45\textwidth}
			\begin{tabular}{|c|c|c|c|c|c|c|}
				\hline
				Case \# & $\kappa_s$ & $\mu_s$ & $m_s$ & $\kappa_{s,p}$ & $\mu_{s,p}$ & $m_{s,p}$ \\ \hline
				1       & 3          & 1       & 10    & 2              & 1           & 1         \\ \hline
				2       & 3          & 1       & 1     & 2              & 1           & 1         \\ \hline
				3       & 3          & 1       & 1     & 2              & 2           & 1         \\ \hline
				4       & 3          & 1       & 1     & 2              & 1           & 10        \\ \hline
			\end{tabular}
			\caption{Channel parameters used for Fig.\ref{cap_mu_m_vary}.}
			\label{obs_4}
		\end{minipage}%
	\end{table}
	
	\begin{figure}[H]
		\centering
		\begin{minipage}[t][][t]{0.45\textwidth}
			\includegraphics[scale=0.45]{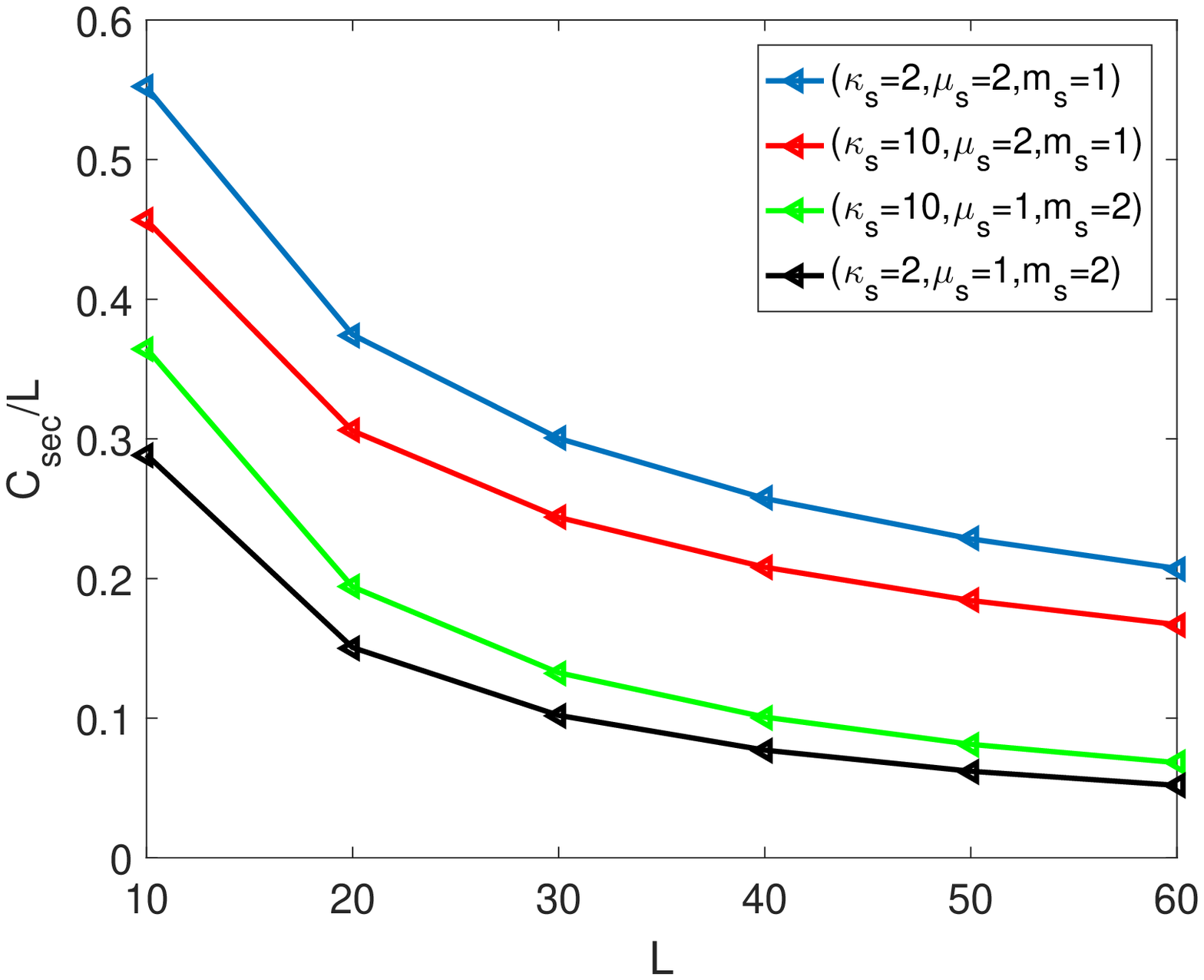}
			\caption{$L$ vs $C_{sec}/L$.}
			\label{cap_k_vary_3}	
		\end{minipage}
		\begin{minipage}[t][][t]{0.45\textwidth}
			\includegraphics[scale=0.45]{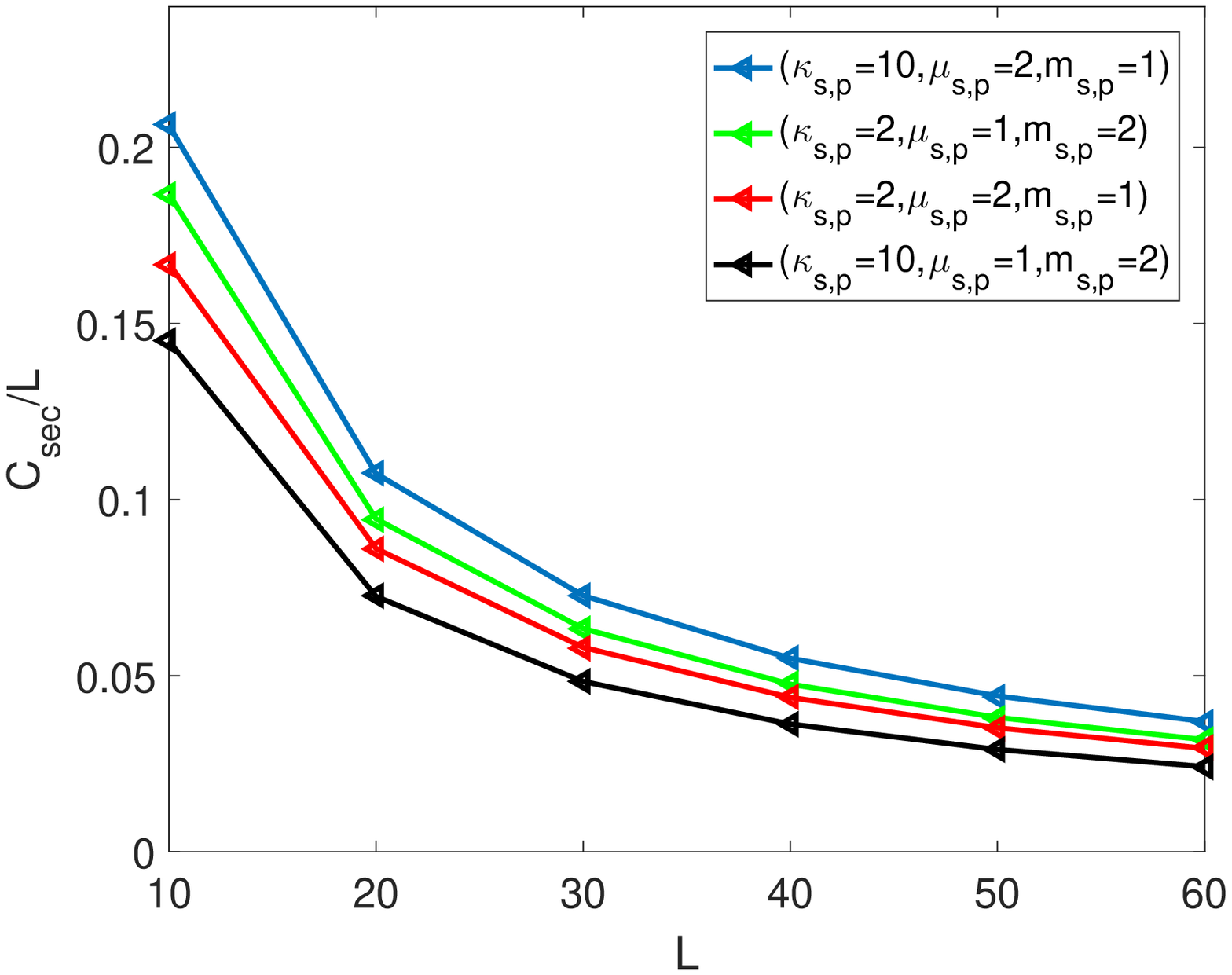}
			\caption{$L$ vs $C_{sec}/L$.}
			\label{cap_k_vary_sp}	
		\end{minipage}
	\end{figure} 
	
	\section{Summary}
	
	To summarize, we make use of tools from EVT to characterize the asymptotic distribution of the minimum of the ratio of $\kappa-\mu$ shadowed random variables and hence derive a simple expression for the distribution of the minimum SIR of PU/SU in a CR environment. We also derive the rate of convergence of the actual distribution of the minimum SIR to the derived distribution. These results are further used to find the optimal SU power allocation and the ergodic multicast rate of SUs. Assuming all the links are undergoing $\kappa-\mu$ shadowed fading, we use results from stochastic ordering to analyze the impact of various channel parameters on the SU performance and have the following analytical observations\footnote{Here, $\uparrow$ and $\downarrow$ are used to represent increase and decrease respectively.}:
	
	\begin{table}[H]
		\centering
		\begin{tabular}{|c|c|c|}
			\hline
			\textbf{\begin{tabular}[c]{@{}c@{}}Increase in \\ system parameter\end{tabular}} & \textbf{\begin{tabular}[c]{@{}c@{}}Optimum\\  SU-Tx power $(\bar{P_s})$\end{tabular}} & \textbf{\begin{tabular}[c]{@{}c@{}}Ergodic MC rate \\ of  SU ($C_{sec}/L$)\end{tabular}} \\ \hline
			$P_p$ & $\uparrow$ & $\uparrow$ \\ \hline
			$p_0$ & $\uparrow$ & $\uparrow$ \\ \hline
			$\gamma_0$ & $\downarrow$ & $\downarrow$ \\ \hline
			$M$ & $\downarrow$ & $\downarrow$ \\ \hline
			$L$ & $-$ & $\downarrow$ \\ 
			\hline
			\begin{tabular}[c]{@{}c@{}}$\kappa_p$, if $m_p-\mu_p>0$\end{tabular} & $\uparrow$ & $\uparrow$ \\ 
			\hline
			\begin{tabular}[c]{@{}c@{}}$\kappa_p$, if $m_p-\mu_p<0$\end{tabular} & $\downarrow$ & $\downarrow$ \\ \hline
			$\mu_p$ & $\uparrow$ & $\uparrow$ \\ \hline
			$m_p$ & $\uparrow$ & $\uparrow$ \\ \hline
			\begin{tabular}[c]{@{}c@{}}$\kappa_{p,s}$, if $m_{p,s}-\mu_{p,s}>0$\end{tabular} & $\downarrow$ & $\downarrow$ \\ \hline
			\begin{tabular}[c]{@{}c@{}}$\kappa_{p,s}$, if $m_{p,s}-\mu_{p,s}<0$\end{tabular} & $\uparrow$ & $\uparrow$ \\ \hline
			$\mu_{p,s}$ & $\downarrow$ & $\downarrow$ \\ \hline
			$m_{p,s}$ & $\downarrow$ & $\downarrow$ \\ \hline
			\begin{tabular}[c]{@{}c@{}}$\kappa_s$, if $m_s-\mu_s>0$\end{tabular} & $-$ & $\uparrow$ \\ 
			\hline
			\begin{tabular}[c]{@{}c@{}}$\kappa_s$, if $m_s-\mu_s<0$\end{tabular} & $-$ & $\downarrow$ \\ \hline
			$\mu_s$ & $-$ & $\uparrow$ \\ \hline
			$m_s$ & $-$ & $\uparrow$ \\ \hline
			\begin{tabular}[c]{@{}c@{}}$\kappa_{s,p}$, if $m_{s,p}-\mu_{s,p}>0$\end{tabular} & $-$ & $\downarrow$ \\ \hline
			\begin{tabular}[c]{@{}c@{}}$\kappa_{s,p}$, if $m_{s,p}-\mu_{s,p}<0$\end{tabular} & $-$ & $\uparrow$ \\ \hline
			$\mu_{s,p}$ & $-$ & $\downarrow$ \\ \hline
			$m_{s,p}$ & $-$ & $\downarrow$ \\ \hline
		\end{tabular}
		\caption{Table of variation in $\bar{P_s}$ and $C_{sec}/L$ with increase in different system parameters }
	\end{table} 
	
	\begin{appendices} 
		\section{Proof for Theorem 1} \label{minima_lim_cdf}
		We know that $\gamma^K_{min} = min\{\gamma_1,\cdots,\gamma_K \}=-max\{-\gamma_1,\cdots,-\gamma_K \}$. Now, if we derive the asymptotic distribution of the maximum of $K$ i.i.d. RVs $\hat{\gamma}^K_{max}= max\{\hat{\gamma}_1,\cdots,\hat{\gamma}_K \}$ where $\hat{\gamma}_l=-\gamma_l; \ l=1,\cdots , K$ and $\hat{\gamma}_l \sim F_{\hat{\gamma}}(z)=1-F_\gamma(-z)$ then we can also derive the asymptotic distribution of $\gamma^K_{min}$. Now, we invoke the following theorem to derive the limiting distribution of $\hat{\gamma}^K_{max}$. 
		\begin{theorem}  \label{fischer_tippet}
			Fisher-Tippet Theorem, Limit Laws for Maxima: \\
			Let $\mathlarger{z_1,z_2,\cdots,z_K}$ be a sequence of $K$ i.i.d. RVs and $\mathlarger{M_K=}$ max $\mathlarger{\{z_1,z_2,\cdots,z_K\} }$; if $\mathlarger{\exists}$ the constants that obey $\mathlarger{a_K>0}$ and $\mathlarger{b_K\in \mathbb{R}}$ and some non-degenerate CDF $\mathlarger{G_{\beta}}$ so that when $\mathlarger{K \to \infty}$ we have,
			\begin{equation}
			a_K^{-1}\left(M_K-b_K\right) \xrightarrow[]{D} G_{\beta},
			\label{condn_exist}
			\end{equation}
			where $\mathlarger{\xrightarrow[]{D}}$ denotes convergence in distribution. Then the CDF $\mathlarger{G_{\upsilon}}$ is one of the three CDFs:
			\begin{itemize}
				\item [] $Frechet$\cite{gumbel2012statistics}$ \ : \ \Lambda_1 (z) :=  \begin{cases}
				0, & z \leq 0  \\
				exp(-z^{-\upsilon}), & z>0,
				\end{cases} $ 
				\item []  $Reversed \ Weibull$\cite{gumbel2012statistics}$ \ : \ \Lambda_2 (z) :=  \begin{cases}
				exp(-(-z)^{\upsilon}), & z \leq 0, \\
				1, & z>0,
				\end{cases} $
				\item [] $Gumbel$\cite{gumbel2012statistics}$ \ : \ \Lambda_3(z) :=exp(-exp{(-z)}), \ \ \ z \in \mathbb{R}.$
			\end{itemize}
		\end{theorem}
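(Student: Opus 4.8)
The plan is to establish the trichotomy through the notion of \emph{max-stability} together with the convergence-of-types theorem, and then to classify all max-stable laws by solving a functional equation. First I would show that any non-degenerate limit $G_\upsilon$ appearing in (\ref{condn_exist}) must be max-stable, i.e.\ that for every integer $k \ge 1$ there exist constants $\tilde a_k > 0$ and $\tilde b_k \in \mathbb{R}$ with $G_\upsilon^k(z) = G_\upsilon(\tilde a_k z + \tilde b_k)$. To see this, write $M_{nk}$ for the maximum of the first $nk$ variables. Grouping these into $k$ blocks of length $n$ gives $\mathbb{P}\big(a_n^{-1}(M_{nk}-b_n) \le z\big) = \big[F^n(a_n z + b_n)\big]^k \xrightarrow{D} G_\upsilon(z)^k$, while applying (\ref{condn_exist}) along the subsequence $nk$ gives $a_{nk}^{-1}(M_{nk}-b_{nk}) \xrightarrow{D} G_\upsilon$. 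Since both limits are non-degenerate, the convergence-of-types (Khinchin) theorem forces the two affine normalizations to be asymptotically equivalent and in the limit yields $G_\upsilon^k(z) = G_\upsilon(\tilde a_k z + \tilde b_k)$. Proving the convergence-of-types theorem itself---that two affine rescalings of a common sequence converge to non-degenerate limits only if the rescalings converge and the limits are affine images of one another---is the first technical step; it rests on the fact that a non-degenerate CDF has at least two distinct points of increase, which pins down the scale and location constants.

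Next I would solve the resulting functional equation. Setting $\psi(z) = -\log G_\upsilon(z)$ on the set where $0 < G_\upsilon < 1$, max-stability becomes $\psi(\tilde a_k z + \tilde b_k) = k\,\psi(z)$ for every positive integer $k$. Using monotonicity of $\psi$ one extends $k \mapsto \tilde a_k$ and $k \mapsto \tilde b_k$ to measurable (indeed monotone) functions $a(s), b(s)$ of a real parameter $s > 0$ obeying $\psi(a(s) z + b(s)) = s\,\psi(z)$, and comparing the two ways of iterating this relation yields the composition identities $a(st) = a(s)a(t)$ and $b(st) = a(t) b(s) + b(t)$. The multiplicative Cauchy equation $a(st) = a(s)a(t)$ with measurability gives $a(s) = s^{-\rho}$ for some real $\rho$. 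The analysis then splits: if $\rho = 0$ one is forced to $a(s) \equiv 1$ and $b(s) = c\log s$, so $\psi$ satisfies $\psi(z + c\log s) = s\,\psi(z)$, whose solution is exponential and, after normalization, gives $\psi(z) = e^{-z}$, i.e.\ the Gumbel law $\Lambda_3$. If $\rho \ne 0$ one may translate so that $b(s) \equiv 0$, whence $\psi(a(s) z) = s\,\psi(z)$ forces a pure power law: $\psi(z) \propto z^{-1/\rho}$ on $(0,\infty)$ when $\rho > 0$, giving the Fr\'echet law $\Lambda_1$, and $\psi(z) \propto (-z)^{-1/\rho}$ on $(-\infty,0)$ when $\rho < 0$, giving the reversed Weibull law $\Lambda_2$, in both cases with shape parameter $\upsilon = |1/\rho|$.

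The main obstacle I expect is this functional-equation step: one must justify passing from the integer relation $\psi(\tilde a_k z + \tilde b_k) = k\psi(z)$ to a relation valid for all real $s > 0$ with well-behaved $a(s), b(s)$, and then exclude pathological non-measurable solutions of the Cauchy equation. The cleanest route is to fix two reference quantiles of $G_\upsilon$ and exploit monotonicity to show that $a(\cdot)$ and $b(\cdot)$ are monotone, hence measurable, so that the standard regularity theory applies and $a(s) = s^{-\rho}$ follows. Once the power-law form of $a(s)$ is in hand, the bookkeeping that separates the two nonzero-$\rho$ cases according to the sign of $\rho$ (equivalently, according to whether the right endpoint of the support is finite or $+\infty$) is routine, and matching constants identifies $\upsilon$ in each of $\Lambda_1$, $\Lambda_2$, $\Lambda_3$, completing the trichotomy.
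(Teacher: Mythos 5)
Your outline is correct, but note that the paper never proves this theorem at all: it is stated as a known classical result and the proof is deferred entirely to the cited reference \cite{de2007extreme}, so your argument supplies what the paper outsources. What you sketch is the classical Fisher--Tippett--Gnedenko route: max-stability of the limit via the blocking identity $F^{nk}(a_n z + b_n) = \left[F^{n}(a_n z + b_n)\right]^{k}$ combined with the convergence-of-types (Khinchin) theorem (and you correctly note the needed fact that non-degeneracy of $G_\upsilon$ forces non-degeneracy of $G_\upsilon^{k}$, since a $0$--$1$ valued $k$-th power would make $G_\upsilon$ itself $0$--$1$ valued); then the reduction to the functional equation $\psi(\tilde a_k z + \tilde b_k) = k\,\psi(z)$ for $\psi = -\log G_\upsilon$, the extension to a real parameter, the Cauchy relations $a(st)=a(s)a(t)$ and $b(st)=a(t)b(s)+b(t)$, and the trichotomy according to the sign of $\rho$. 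You also correctly flag the one genuinely delicate point, namely ruling out pathological solutions of the Cauchy equation, and the standard fix (monotonicity of $a(\cdot)$ and $b(\cdot)$ via two fixed quantiles of $G_\upsilon$, hence measurability, hence $a(s)=s^{-\rho}$) is the right one; two small loose ends worth stating explicitly are that the case $\rho = 0$, $b \equiv$ constant multiple of $\log$ with zero constant must be excluded as degenerate, and that in the case $\rho \neq 0$ one needs the existence of the fixed point $z_0 = b(s)/(1-a(s))$ being independent of $s$ before translating it to the origin. By contrast, the reference the paper relies on establishes the theorem through (extended) regular variation of the tail quantile function $U(t) = F^{-1}(1-1/t)$; that machinery is heavier than your argument, but it is also what underlies the domain-of-attraction criteria and the rate-of-convergence analysis that the paper actually uses downstream (the von Mises-type condition in Appendix A and the $\delta$-neighborhood argument in Appendix B). So your proof is the more elementary and self-contained one, while the paper's citation implicitly buys the regular-variation toolkit on which the rest of its appendices depend.
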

		\begin{proof}
			Please refer to \cite{de2007extreme} for the proof.
		\end{proof}
		To determine the limiting distribution among these three, we first have to define the Maximum Domain of Attraction (\textit{MDA}). 
		\theoremstyle{definition}
		\begin{defn}{Maximum Domain of Attraction \cite{de2007extreme}:} \label{def_mda}
			The CDF $F$ of i.i.d. RVs $z_1,\cdots, z_K$ belongs to the $MDA$ of the extreme value distribution (EVD) $G_\upsilon$, if and only if $\exists$ the constants obeying $a_K>0$ and $b_K\in \mathbb{R}$, so that (\ref{condn_exist}) holds.
		\end{defn}
		\begin{lemma} \label{mda_weibull}
			Let $F$ be a distribution function and $x^*:=sup\{x:F(x)<1\}$. Let us assume that $F^{''}(x)$ exists and $F^{'}(x)$ is positive for all $x$ in some left neighborhood of $x^*$. If 
			\begin{equation}
			\lim\limits_{x \to x^*}\frac{(x^*-x)f(x)}{1-F(x)}=\upsilon; \ \upsilon>0,
			\label{weibull_contn}
			\end{equation}
			then $F(.)$ belongs to the MDA of the reversed Weibull distribution.
		\end{lemma}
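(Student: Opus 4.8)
The plan is to recognize the hypothesis as a von Mises--type sufficient condition and to reduce membership in the MDA of the reversed Weibull law $\Lambda_2$ to a single \emph{regular variation} statement about the upper tail of $F$ near $x^*$. First I would note that the requirement that the limit $\upsilon$ be finite and strictly positive, together with the factor $(x^*-x)$, forces $x^*<\infty$; this is exactly the regime (bounded right endpoint) in which $\Lambda_2$, rather than the Fr\'echet or Gumbel law, can arise. Writing $\overline{F}:=1-F$, the goal is then to show that $\overline{F}(x^*-t)$ is regularly varying of index $\upsilon$ as $t\downarrow 0$, since Gnedenko's characterization of $\mathrm{MDA}(\Lambda_2)$ in \cite{de2007extreme} states precisely that a CDF with finite endpoint $x^*$ lies in the domain of attraction of the reversed Weibull with shape $\upsilon$ if and only if this tail is regularly varying with that index.

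Next I would introduce the change of variable $t=x^*-x$ and set $V(t):=\overline{F}(x^*-t)$, the tail mass at distance $t$ below the endpoint. Because $F'(=f)$ is positive and $F''$ exists in a left neighbourhood of $x^*$, $V$ is differentiable for small $t>0$ with $V'(t)=f(x^*-t)>0$, and a direct substitution recasts \eqref{weibull_contn} in its logarithmic-derivative form
\begin{equation}
\lim_{t\downarrow 0}\frac{t\,V'(t)}{V(t)}=\upsilon .
\end{equation}
This is the canonical signature of regular variation. I would then write $V'(t)/V(t)=(\upsilon+\epsilon(t))/t$ with $\epsilon(t)\to 0$, integrate from $t$ to a fixed small $t_0$, and exponentiate to obtain the Karamata representation
\begin{equation}
V(t)=t^{\upsilon}\,L(t),\qquad L(t)=V(t_0)\,t_0^{-\upsilon}\exp\!\left(-\int_t^{t_0}\frac{\epsilon(s)}{s}\,ds\right).
\end{equation}
A short estimate of $L(\lambda t)/L(t)$, using $|\epsilon(s)|\le\delta$ for $s$ small, shows this ratio tends to $1$, so $L$ is slowly varying at $0$ and hence $\overline{F}(x^*-t)$ is regularly varying of index $\upsilon$, as required.

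To finish, I would invoke the Gnedenko characterization above to conclude $F\in\mathrm{MDA}(\Lambda_2)$ with shape $\upsilon$; explicitly one may take $b_K=x^*$ and $a_K=x^*-F^{-1}(1-1/K)$, and then the uniform convergence theorem for regularly varying functions converts the tail asymptotics into $\lim_{K\to\infty}K\,\overline{F}(x^*+a_K z)=(-z)^{\upsilon}$ for $z<0$, whence $F^K(x^*+a_K z)\to \exp(-(-z)^{\upsilon})=\Lambda_2(z)$. I expect the main obstacle to be the rigorous passage from the pointwise limit $tV'/V\to\upsilon$ to genuine regular variation: controlling the error integral $\int\epsilon(s)/s\,ds$ so that it yields a bona fide slowly varying factor, and then transferring that tail regular variation into the extremal limit uniformly in $z$ via the uniform convergence theorem; the remaining steps (the endpoint and differentiability bookkeeping, and citing \cite{de2007extreme}) are routine by comparison.
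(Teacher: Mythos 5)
Your proof is correct, but note that it differs from the paper in a structural sense: the paper does not prove this lemma at all — it is quoted as a known result (the von Mises sufficient condition for the reversed-Weibull domain of attraction), with the proof deferred to the cited textbook of de Haan and Ferreira. Your argument is essentially the classical textbook route: rewrite the hypothesis as $\lim_{t \downarrow 0} t V'(t)/V(t) = \upsilon$ for $V(t) = 1 - F(x^* - t)$, integrate to obtain the Karamata representation $V(t) = t^{\upsilon} L(t)$ with $L$ slowly varying at $0$, and conclude either via Gnedenko's characterization or, as your last paragraph does, by directly verifying $F^K(x^* + a_K z) \to \exp(-(-z)^{\upsilon})$ with $b_K = x^*$ and $a_K = x^* - F^{-1}(1 - 1/K)$; the direct verification makes the appeal to Gnedenko's theorem unnecessary, so the proof is self-contained modulo elementary regular-variation facts. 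Three remarks. First, the obstacle you flag — passing from the pointwise limit to genuine regular variation — is already resolved by your own estimate: once $|\epsilon(s)| \le \delta$ on $(0, t]$, the error integral over $[\lambda t, t]$ is bounded by $\delta \log(1/\lambda)$, which is exactly what slow variation of $L$ requires. Second, the uniform convergence theorem is not actually needed: weak convergence only requires the limit at each fixed $z < 0$, and there slow variation with the fixed ratio $\lambda = -z$ suffices since $a_K \downarrow 0$. Third, the hypothesis that $F''$ exists, which you never invoke explicitly, is what makes $V'/V$ continuous and thus legitimizes the fundamental-theorem-of-calculus step in your integration; it is worth saying so. Your norming constants also specialize (with $x^* = 0$ for the variable $\hat{\gamma} = -\gamma$) to exactly the pair $b_K = 0$, $a_K = -F_{\hat{\gamma}}^{-1}(1 - K^{-1})$ that the paper's Appendix A uses downstream. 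In short: the paper's citation is the economical choice for a standard result, while your proof makes the lemma self-contained and exhibits explicitly the norming sequence on which the rest of the argument relies.
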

		\begin{proof}
			Please refer to \cite{de2007extreme} for the proof.
		\end{proof}
		Now, if we show that the CDF $\mathlarger{F_{\hat{\gamma}}(z)}$ satisfies the relationship in (\ref{weibull_contn}), then from the definition of the $MDA$ of an EVD, we may conclude that there exists $\mathlarger{a_K}$ and $\mathlarger{b_K}$ satisfying (\ref{condn_exist}). A choice for the corresponding constants of the reversed Weibull distribution is given in \cite{de2007extreme} as $\mathlarger{b_K = 0}$ and $\mathlarger{a_K = -F^{-1}_{\hat{\gamma}}(1-K^{-1})}$.
		\begin{theorem}
			The CDF $\mathlarger{F_{\hat{\gamma}}(z)}$ is in the $MDA$ of the reversed Weibull distribution.
		\end{theorem}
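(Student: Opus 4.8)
The plan is to verify the single hypothesis of Lemma \ref{mda_weibull} for the CDF $F_{\hat\gamma}$, namely the von Mises--type limit in (\ref{weibull_contn}), and to read off the shape parameter $\upsilon$. First I would locate the right endpoint $x^* := \sup\{z : F_{\hat\gamma}(z) < 1\}$. Because $\gamma_k$ in (\ref{sir_rv}) is a ratio of strictly positive $\kappa-\mu$ shadowed RVs, $\gamma_k \geq 0$ almost surely and, since the numerator density in (\ref{kmu_pdf1}) behaves like $x^{\mu-1}$ near the origin, $\gamma_k$ has no atom at $0$; hence $\hat\gamma = -\gamma$ is supported on $(-\infty,0]$ and $x^* = 0$. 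Recalling that $F_{\hat\gamma}(z) = 1 - F_\gamma(-z)$ and therefore $f_{\hat\gamma}(z) = f_\gamma(-z)$, the substitution $w = -z$ turns the quantity in (\ref{weibull_contn}) into
\begin{equation}
\lim_{z\to 0^-}\frac{(x^*-z)\,f_{\hat\gamma}(z)}{1-F_{\hat\gamma}(z)} = \lim_{w\to 0^+}\frac{w\,f_\gamma(w)}{F_\gamma(w)},
\end{equation}
so the whole problem reduces to the small-argument asymptotics of $F_\gamma$ and $f_\gamma$.

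Next I would establish that $F_\gamma(w) \sim C'\,w^{\mu}$ as $w \to 0^+$. The idea is to condition on the interferer sum $S := \sum_{j=1}^{N} |c_{j,k}|^2$, a strictly positive RV independent of $|b_k|^2$, and to write $F_\gamma(w) = \mathbb{E}_S\big[F_{|b_k|^2}(wS)\big]$. From (\ref{kmu_pdf1}), as $x\to 0$ one has $e^{-x/\theta}\to 1$ and ${}_1F_1(m,\mu,x/\theta - x/\lambda)\to 1$, so the signal density satisfies $f_{|b_k|^2}(x) \sim x^{\mu-1}/(\theta^{\mu-m}\lambda^{m}\Gamma[\mu])$ and hence $F_{|b_k|^2}(y) \sim y^{\mu}/(\mu\,\theta^{\mu-m}\lambda^{m}\Gamma[\mu])$. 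Passing the limit through the expectation gives
\begin{equation}
F_\gamma(w) \sim \frac{\mathbb{E}[S^{\mu}]}{\mu\,\theta^{\mu-m}\lambda^{m}\Gamma[\mu]}\,w^{\mu} =: C'\,w^{\mu},
\end{equation}
and differentiating yields $f_\gamma(w) \sim \mu C' w^{\mu-1}$. Substituting these into the displayed limit immediately gives
\begin{equation}
\lim_{w\to 0^+}\frac{w\,f_\gamma(w)}{F_\gamma(w)} = \frac{\mu C' w^{\mu}}{C' w^{\mu}} = \mu =: \upsilon > 0,
\end{equation}
so the hypothesis of Lemma \ref{mda_weibull} holds with $\upsilon = \mu$ and $F_{\hat\gamma}$ lies in the MDA of the reversed Weibull distribution. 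This is exactly the shape parameter claimed in Theorem \ref{evt_main}; translating back through $\gamma^{K}_{min} = -\hat\gamma^{K}_{max}$ together with the constants $b_K = 0$ and $a_K = -F^{-1}_{\hat\gamma}(1-K^{-1}) = F^{-1}_\gamma(K^{-1})$ then recovers the Weibull limit (\ref{asymp_cdf}).

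The step I expect to be the main obstacle is the rigorous justification of the asymptotic equivalence $F_\gamma(w)\sim C'w^{\mu}$, i.e.\ interchanging the $w\to 0^+$ limit with the expectation over $S$. This requires a dominated-convergence / uniform-integrability argument together with the finiteness of the fractional moment $\mathbb{E}[S^{\mu}]$. The latter follows because each $\kappa-\mu$ shadowed interferer $|c_{j,k}|^2$ has an exponentially decaying tail --- for large $x$ the Kummer function in (\ref{kmu_pdf1}) contributes a factor $e^{x(1/\theta-1/\lambda)}$ that only partially cancels $e^{-x/\theta}$, leaving a net $e^{-x/\lambda}$ decay --- so all moments of each interferer, and hence of the finite sum $S$, are finite. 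A minor additional check is that $F_{\hat\gamma}$ is twice differentiable with positive first derivative on a left neighbourhood of $x^*=0$, which holds since $F_\gamma$ is a smooth, strictly increasing integral for $w>0$; these are precisely the regularity conditions demanded by Lemma \ref{mda_weibull}.
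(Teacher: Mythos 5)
Your proof is correct and reaches the paper's conclusion ($\upsilon=\mu$), but it takes a genuinely different route. The paper verifies the von Mises condition (\ref{weibull_contn}) by brute force on the exact special-function expressions: it writes both $f_\gamma$ and a reformulated $F_\gamma$ as Lauricella $E_D$ series (from \cite{kumar2017outage}), observes that the ratio of the prefactors tends to $\Gamma[\mu+1]/\Gamma[\mu]=\mu$, and argues that as $z\to 0$ the surviving terms (those with the iterating indices attached to the $z$-dependent arguments set to zero) of the numerator and denominator $E_D$ series coincide, so their ratio is $1$. You instead exploit the probabilistic structure of the SIR in (\ref{sir_rv}): conditioning on the interference $S$ gives $F_\gamma(w)=\mathbb{E}_S[F_{|b|^2}(wS)]$ and $f_\gamma(w)=\mathbb{E}_S[S\,f_{|b|^2}(wS)]$, and only the local behavior $f_{|b|^2}(x)\sim x^{\mu-1}/(\theta^{\mu-m}\lambda^m\Gamma[\mu])$ from (\ref{kmu_pdf1}) plus finiteness of $\mathbb{E}[S^{\mu}]$ is needed. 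This buys two things: it avoids the $E_D$ machinery entirely (where the paper's term-by-term limit interchange inside an infinite series is itself only implicitly justified), and it makes transparent \emph{why} the shape parameter is $\mu$ --- it is the order of vanishing of the signal PDF at the origin --- so the argument generalizes verbatim to any interference law with finite $\mu$-th moment. The dominated-convergence step you flag as the main obstacle does go through: the ratios $F_{|b|^2}(y)/y^{\mu}$ and $f_{|b|^2}(x)/x^{\mu-1}$ are continuous on $(0,\infty)$ with finite limits at $0$ and at $\infty$ (exponential tail decay $\sim e^{-x/\lambda}$), hence globally bounded, giving dominating functions $CS^{\mu}$ integrable by your moment argument. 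One caution: do not phrase the density asymptotics as obtained by ``differentiating'' $F_\gamma(w)\sim C'w^{\mu}$ --- asymptotic equivalences cannot be differentiated in general; instead apply the same conditioning/domination argument directly to $f_\gamma(w)=\mathbb{E}_S[S\,f_{|b|^2}(wS)]$, which is available in your setup and yields $f_\gamma(w)\sim\mu C'w^{\mu-1}$ rigorously.
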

		\begin{proof}
			Here we would have to evaluate the following limit : 
			\begin{equation}
			\lim\limits_{z \to 0} \frac{(-z) f_{\hat{\gamma}}(z)}{1-F_{\hat{\gamma}}(z)}.
			\label{limit_eva}
			\end{equation}
			Now, by exploiting the properties of the transformation of RVs, we have $f_{\hat{\gamma}}(z) = f_{{\gamma}}(-z)$ and $F_{\hat{\gamma}}(z)=1-F_{{\gamma}}(-z)$. Thus, (\ref{limit_eva}) can be evaluated as 
			\begin{equation}
			\lim\limits_{z \to 0} \frac{(-z) f_{{\gamma}}(-z)}{F_{{\gamma}}(-z)}.
			\label{limit_eva2}
			\end{equation}
			The pdf $f_{{\gamma}}(z)$ is given by (\ref{pdf1}), where ${K_2 = \frac{\theta^{(m+\sum\limits_{i=1}^{N}\mu_i)}\Gamma\left[\mu+\sum\limits_{i=1}^{N}\mu_i\right]}{\lambda^m \Gamma[\mu] \Gamma\left[\sum\limits_{i=1}^{N}\mu_i\right]\prod\limits_{i=1}^{N}\theta_i^{\mu_i-m_i}\lambda_i^{m_i}}}$;
			
			\begin{equation}
			\begin{aligned}
			f_{\gamma}(z) =  K_2 z^{-(1+\sum\limits_{i=1}^{N}\mu_{i})} \left(1+\frac{\theta}{z \theta_1}\right)^{-\left(\mu+\sum\limits_{i=1}^{N}\mu_{i}\right)}\times _{(1)}^{(1)}E_D^{(2N)}\left[\mu+\sum_{i=1}^{N}\mu_{i},m,\mu_2-m_2,\cdots, \right.\\ \left. \mu_N-m_N,m_1,\cdots,m_N,\mu,\sum_{i=1}^{N}\mu_{i},
			\frac{z\theta_1(\lambda-\theta)}{\lambda(\theta+z\theta_1)},\frac{\theta(\theta_2-\theta_1)}{\theta_2(\theta+z\theta_1)}, \cdots,\frac{\theta(\theta_N-\theta_1)}{\theta_N(\theta+z\theta_1)}, \frac{\theta(\lambda_1-\theta_1)}{\lambda_1(\theta+z\theta_1)}, \right. \\ \left. \cdots, \frac{\theta(\lambda_N-\theta_1)}{\lambda_N(\theta+z\theta_1)}\right].
			\end{aligned} 
			\label{pdf1}
			\end{equation}
			
			Similarly, from \cite[Eqn. (6)]{kumar2017outage}, we have \begin{equation}
			\begin{aligned}
			F_{\gamma}(z) &=1-K_1\left( \frac{z\theta_1}{\theta+z\theta_1}\right)^{\sum\limits_{i=1}^{N}\mu_i+\mu} \times  ^{(1)}_{(1)}E^{(2N+1)}_{D}\left [\sum_{i=1}^{N}\mu_i+\mu,m,1,\mu_2-m_2,\cdots,\mu_N-m_N,m_1, \right. \\  & \left. \cdots,m_N,\mu,1+\sum_{i=1}^{N}\mu_i;  \frac{(\lambda-\theta)z\theta_1}{(\theta+z\theta_1)\lambda},\frac{\theta}{\theta+z\theta_1},\frac{\theta\theta_2-\theta\theta_1}{\theta_2(\theta+z\theta_1)},\cdots,\frac{\theta\theta_N-\theta\theta_1}{\theta_N(\theta+z\theta_1)},\frac{\theta\lambda_1-\theta\theta_1}{\lambda_1(\theta+z\theta_1)}, \right. \\ & \left. \cdots,\frac{\theta\lambda_N-\theta\theta_1}{\lambda_N(\theta+z\theta_1)} \right],
			\end{aligned}
			\label{cdf2}
			\end{equation}
			
			where ${K_1 =\frac{\Gamma\left[ \sum\limits_{1=1}^N \mu_i+\mu\right]\left( \prod\limits_{i=1}^N\theta_i^{-(\mu_i-m_i)}\lambda_i^{-m_i}\right)\theta^{\sum\limits_{i=1}^N\mu_i+m}}{\Gamma\left[ \sum\limits_{1=1}^N \mu_i+1\right]z^{\sum\limits_{i=1}^N\mu_i}\lambda^m\Gamma[\mu]}}$. We now have to evaluate the limit of $F_\gamma(-z)$ in the denominator of (\ref{limit_eva2}), and the above expression of the CDF is available in the $[1-$CCDF] form. For ease of further analysis we reformulate the CDF as follows\footnote{This proof is not included in this paper since it is derived by repeating steps very similar to the derivation of CCDF in \cite{kumar2017outage}.} : 
			
			\begin{align}
			& F_\gamma(z) = \frac{z^{-\sum\limits_{i=1}^N\mu_i}\theta^{\sum\limits_{i=1}^N\mu_i+m} \Gamma \left[\sum\limits_{i=1}^N\mu_i + \mu \right]\lambda^{-m}}{\Gamma[\mu+1]\Gamma\left[\sum\limits_{i=1}^N\mu_i \right] \prod\limits_{i=1}^N \theta_i^{\mu_i-m_i}\lambda_i^{m_i}} \ \left(\frac{\theta_1 z}{\theta+ \theta_1 z} \right)^{\mu+ \sum\limits_{i=1}^N\mu_i}\\ & _{(1)}^{(2)}E_D^{2N+1}\left[\mu+\sum\limits_{i=1}^N\mu_i,1,m,\mu_2-m_2,\cdots,\mu_N-m_N,m_1,\cdots,m_N,\mu+1,\sum\limits_{i=1}^N\mu_i,\frac{\theta_1 z}{\theta+\theta_1 z}, \right. \\ & \left. \frac{\theta(\lambda-\theta)z}{\lambda(\theta+\theta_1 z)}, \frac{\theta(\theta_2-\theta_1)}{\theta_2(\theta+\theta_1z)},\cdots,\frac{\theta(\theta_N-\theta_1)}{\theta_N(\theta+\theta_1z)},\frac{\theta(\lambda_1-\theta_1)}{\lambda_1(\theta+\theta_1z)},\cdots,\frac{\theta(\lambda_N-\theta_1)}{\lambda_N(\theta+\theta_1z)} \right].
			\end{align}
			Now, we can make use of the following properties of the limits to proceed with the evaluation of (\ref{limit_eva2}): 
			\begin{itemize}
				\item $\lim\limits_{x \to a} [f(x)g(x)] = \lim\limits_{x \to a} f(x). \lim\limits_{x \to a}g(x)$
				\item  $\lim\limits_{x \to a} \frac{f(x)}{g(x)} = \frac{\lim\limits_{x \to a} f(x)}{\lim\limits_{x \to a}g(x)}$, if $\lim\limits_{x \to a}g(x) \neq 0$.
			\end{itemize}
			We first consider the ratio without the $E_D(.)$ terms. Here, we have
			\begin{equation}
			\lim\limits_{z \to 0} \frac{\frac{(-z)^{\mu}\theta^{m+\sum\limits_{i=1}^N\mu_i}\Gamma\left[\mu+\sum\limits_{i=1}^N\mu_i \right] \theta_1^{\mu+\sum\limits_{i=1}^N\mu_i}}{\lambda^m \Gamma[\mu]\Gamma[\sum\limits_{i=1}^N\mu_i]\prod\limits_{i=1}^N \theta_i^{\mu_i-m_i}\lambda_i^{m_i}  \left(\theta-z \theta_1 \right)^{\mu+[\sum\limits_{i=1}^N\mu_i}}}{\frac{(-z)^{\mu}\theta^{m+\sum\limits_{i=1}^N\mu_i}\Gamma\left[\mu+\sum\limits_{i=1}^N\mu_i \right] \theta_1^{\mu+\sum\limits_{i=1}^N\mu_i}}{\lambda^m \Gamma[\mu+1]\Gamma[\sum\limits_{i=1}^N\mu_i]\prod\limits_{i=1}^N \theta_i^{\mu_i-m_i}\lambda_i^{m_i}  \left(\theta-z \theta_1 \right)^{\mu+[\sum\limits_{i=1}^N\mu_i}}} = \lim\limits_{z \to 0} \frac{\Gamma[\mu+1]}{\Gamma[\mu]} = \mu.
			\end{equation}
			Now, if we analyze the $E_D(.)$ term in the numerator, we have 
			\begin{equation}
			\begin{aligned}
			& _{(1)}^{(1)}E_D^{(2N)}\left[\mu+\sum_{i=1}^{N}\mu_{i},m,\mu_2-m_2,\cdots,  \mu_N-m_N,m_1,\cdots,m_N,\mu,\sum_{i=1}^{N}\mu_{i},
			\frac{-z\theta_1(\lambda-\theta)}{\lambda(\theta-z\theta_1)},\frac{\theta(\theta_2-\theta_1)}{\theta_2(\theta-z\theta_1)}, \right.\\ & \left. \cdots, \frac{\theta(\theta_N-\theta_1)}{\theta_N(\theta-z\theta_1)}, \frac{\theta(\lambda_1-\theta_1)}{\lambda_1(\theta-z\theta_1)},\cdots, \frac{\theta(\lambda_N-\theta_1)}{\lambda_N(\theta-z\theta_1)}\right],
			\end{aligned} 
			\label{ed_pdf1}
			\end{equation} and the $E_D(.)$ function has the following series expansion : 
			\begin{equation}
			\begin{aligned}
			_{(1)}^{(k)}E_D^{(n)}(a,b_1,\cdots,b_n;c,c';x_1,\cdots,x_n) =   \sum\limits_{i_1\cdots i_n=0}^{\infty} \frac{(a)_{i_1+\cdots+i_n}(b_1)_{(i_1)}\cdots (b_n)_{(i_n)}x_1^{i_1}\cdots x_n^{i_n}}{(c)_{(i_1+\cdots+i_k)}(c')_{(i_{k+1}+\cdots+i_n)}i_1! \cdots i_n!}.
			\end{aligned}
			\end{equation}
			Thus, (\ref{ed_pdf1}) can be expanded as 
			\begin{align}
			\sum\limits_{p_1,\cdots,p_{2N}=0}^\infty &  \frac{\left( \mu+\sum\limits_{i=1}^{N}\mu_{i} \right)_{p_1+\cdots+p_{2N}}(m)_{p_1}\left(\mu_2-m_2 \right)_{p_2}\cdots\left(\mu_N-m_N \right)_{p_N}(m_1)_{p_{N+1}}\cdots(m_N)_{p_{2N}}}{(\mu)_{p_1}\left(\sum_{i=1}^{N}\mu_{i} \right)_{p_2+\cdots+p_{2N}}}  \\ & \times \prod\limits_{i=1}^{2N} \frac{x_i^{p_i}}{p_i!},
			\end{align}
			where $x_1 =\frac{-z\theta_1(\lambda-\theta)}{\lambda(\theta-z\theta_1)} $, $x_i = \frac{\theta(\theta_i-\theta_1)}{\theta_i(\theta-z\theta_1)}; \ i=2\cdots N$ and  $x_i = \frac{\theta(\lambda_i-\theta_1)}{\lambda_i(\theta-z\theta_1)}; \ i=N+1\cdots 2N$. Note that, for $p_1 \neq 0$, we have $\lim\limits_{z \to 0}x_1 = 0$. Hence, at $\lim z \to 0$, only the terms corresponding to $p_1 = 0$ will remain with $x_i;i=2,\cdots,2N$ evaluated at $z \to 0$. Similarly, if we now consider the $E_D(.)$ term in the denominator (from the CDF expression), it has the following series expansion : 
			\begin{align}
			\sum\limits_{p_1,\cdots,p_{2N+1}=0}^\infty &  \frac{\left( \mu+\sum\limits_{i=1}^{N}\mu_{i} \right)_{p_1+\cdots+p_{2N}}(1)_{p_1}(m)_{p_2}\left(\mu_2-m_2 \right)_{p_2}\cdots\left(\mu_N-m_N \right)_{p_N}(m_1)_{p_{N+1}}\cdots(m_N)_{p_{2N}}}{(\mu+1)_{p_1+p_2}\left(\sum_{i=1}^{N}\mu_{i} \right)_{p_3+\cdots+p_{2N}}}  \\ & \times \prod\limits_{i=1}^{2N+1} \frac{x_i^{p_i}}{p_i!},
			\end{align}
			where $x_1 =\frac{-z\theta_1}{(\theta-z\theta_1)} $,$x_2=\frac{\theta(\lambda-\theta)(-z)}{\lambda(\theta-\theta_1 z)}$ $x_i = \frac{\theta(\theta_i-\theta_1)}{\theta_i(\theta-z\theta_1)}; \ i=3\cdots N+1$ and  $x_i = \frac{\theta(\lambda_i-\theta_1)}{\lambda_i(\theta-z\theta_1)}; \ i=N+2\cdots 2N+1$. Note that whenever $p_1 \neq 0$, $p_2 \neq 0$, $\lim\limits_{z \to 0}x_1 = 0$ and $\lim\limits_{z \to 0}x_2 = 0$, respectively. Hence, at $\lim z \to 0$, only the terms corresponding to $p_1 =p_2= 0$ will remain with $x_i;i=3,\cdots,2N+1$ evaluated at $z \to 0$.  Now, note that this set of remaining terms is the same for both the $E_D$ terms in the numerator as well as the denominator. Hence, the ratio of these terms evaluates to one. Thus, we have
			\begin{equation}
			\lim\limits_{z \to 0}\frac{(-z) f_{{\gamma}}(-z)}{F_{{\gamma}}(-z)} = \lim\limits_{z \to 0} \frac{(-z) f_{\hat{\gamma}}(z)}{1-F_{\hat{\gamma}}(z)} = \mu. 
			\label{weibull_contn_test}
			\end{equation} 

		\end{proof}
		Now we know that the asymptotic distribution of $\hat{\gamma}^K_{max}$ is a reversed Weibull distribution, hence we conclude that the asymptotic distribution of the minimum of $K$ SIR RVs ($\gamma^K_{min}$) is a Weibull distribution with shape parameter $\upsilon=\mu$ and the shape parameter $a_K$ as given in (\ref{asymp_cdf}).

		\section{Derivation of rate of convergence} \label{rate_covg}
		
		To prove the result in Theorem \ref{rate_cnvg}, we first define the $\delta$-neighborhood of generalized pareto distribution (GPD) for a Weibull RV. Let the $\delta$-neighbourhood be denoted by $Q_2(\delta)$ and the GPD for a Weibull RV be denoted by $W_{\{2,\nu \}}$. The Extreme Value Distributions (EVDs) lies in the $\delta$ neighbourhood of one of three GPD $W_{\{i,\nu\}}; \  i=1,2,3$ with $\delta=1$ .
		
		\theoremstyle{definition}
		\begin{defn}$\delta$-neighborhood $Q_2(\delta)$ of the GPD  $W_{\{2,\nu\}}$ \cite{falk2010laws} is defined as
			$Q_2(\delta)$ := \{F\ : \ $\omega(F)<$ $\infty$ and $F$ has a density $f$ on $[z_0,\omega(F)]$ for some $z_0<\omega(F)$ such that for some shape parameter $\nu >0$ and some scale parameter $a>0$  on  $[z_0,\omega(F)]$, we have,
			\begin{equation}
			f(z)=\frac{1}{a}W'_{2,\nu}\left(\frac{z-\omega(F)}{a}\right)(1+O((1-W_{2,\nu}(z-\omega(F)))^\delta) \},
			\label{pdf_condition}
			\end{equation}  
			where ${\omega(F):= sup \{z \in \mathbb{R} : F(z)<1 \}}$.
			In fact the GPD for the Weibull distribution is defined in \cite{falk2010laws} as ${W_{2,\nu} = 1 - (-z)^{\nu}; \ -1\leq z \leq 0}$ and using this, (\ref{pdf_condition}) can be rewritten as 
			\begin{equation}
			f(z) = \frac{\nu}{a}\left( \frac{-z+\omega(F)}{a}\right)^{\nu-1}\left( 1+O(((-z+\omega(F))^{\nu})^\delta)\right).
			\label{fz_with_gpd}
			\end{equation}
		\end{defn}
		This definition says that, if a PDF $f$ on  $[z_0,\omega(F)]$ for some $z_0<\omega(F)$ can be written in the form of (\ref{fz_with_gpd}), then the corresponding CDF $F$ belongs to the $\delta$-neighborhood $Q_{2}(\delta)$ of the Weibull distribution\footnote{For a  real or complex valued function $g_1(x)$ and a strictly positive real valued function $g_2(x)$ both defined on some unbounded subset of $\mathbb{R}^+$, we say $g_1(x)=O(g_2(x))$, iff $\exists$ $M\in \mathbb{R}^+$ and $x_0 \in \mathbb{R}$ such that, $|g_1(x)| \leq Mg_2(x) $ $\forall x \geq x_0$.}. The PDF of the RV $\hat{\gamma}=-\gamma$ is given by \cite{subhash2019asymptotic},
		
		\begin{equation}
		\begin{aligned}
		f_{\gamma}(z) =  K_1 (-z)^{-\mu} \left(\frac{\theta_1}{\theta - z \theta_1}\right)^{\left(\mu+\sum\limits_{i=1}^{N}\mu_{i}\right)}\times _{(1)}^{(1)}E_D^{(2N)}\left[\mu+\sum_{i=1}^{N}\mu_{i},m,\mu_2-m_2,\cdots, \right.\\ \left. \mu_N-m_N,m_1,\cdots,m_N;\mu,\sum_{i=1}^{N}\mu_{i};
		\frac{-z\theta_1(\lambda-\theta)}{\lambda(\theta-z\theta_1)},\frac{\theta(\theta_2-\theta_1)}{\theta_2(\theta-z\theta_1)}, \cdots,\frac{\theta(\theta_N-\theta_1)}{\theta_N(\theta-z\theta_1)}, \frac{\theta(\lambda_1-\theta_1)}{\lambda_1(\theta-z\theta_1)}, \right. \\ \left. \cdots, \frac{\theta(\lambda_N-\theta_1)}{\lambda_N(\theta-z\theta_1)}\right],
		\end{aligned} 
		\label{pdf}
		\end{equation}
		where ${K_1 = \frac{\theta^{(m+\sum\limits_{i=1}^{N}\mu_i)}\Gamma\left[\mu+\sum\limits_{i=1}^{N}\mu_i\right]}{\lambda^m \Gamma[\mu] \Gamma\left[\sum\limits_{i=1}^{N}\mu_i\right]\prod\limits_{i=1}^{N}\theta_i^{\mu_i-m_i}\lambda_i^{m_i}}}$. The ${E_D^{(2N)}(.)}$ term in the above expression has the following series expansion from \cite{exton1976multiple}: 
		\begin{equation}
		_{(1)}^{(1)}E_D^{(N)}[a,b_1,\cdots,b_N;c,c';x_1,\cdots,x_N] = \sum\limits_{p_1,\cdots,p_N=0}^{\infty}\frac{(a)_{p_1+\cdots+p_N}\prod\limits_{i=1}^{N}(b_i)_{p_i}\prod\limits_{i=1}^{N}x_i^{p_i}}{(c)_{p_1}(c')_{p_2+\cdots+p_N}p_1!\cdots p_N!}.
		\end{equation}
		Using the above series expansion, we rewrite (\ref{pdf}) as 
		
		\begin{equation}
		\begin{aligned}
		f_{\gamma}(z) = & K_1 (-z)^{-\mu} \left(\frac{\theta_1}{\theta - z \theta_1}\right)^{\left(\mu+\sum\limits_{i=1}^{N}\mu_{i}\right)}\sum\limits_{p_1,\cdots,p_{2N}=0}^{\infty}\frac{\left(\mu+\sum\limits_{i=1}^{N}\mu_i\right)_{p_1+\cdots+p_{2N}}}{(\mu)_{p_1}}  \\ 
		&  \times \frac{(m)_{p_1}  \prod\limits_{i=2}^{N}(\mu_i-m_i)_{p_i}  \prod\limits_{i=N+1}^{2N}(m_i)_{p_i} }{\left(\sum\limits_{i=1}^{N}\mu_i\right)_{p_2+\cdots+p_{2N}}}\prod_{i=1}^{2N}\frac{z_i^{p_i}}{p_i!},
		\end{aligned}
		\label{pdf_series}
		\end{equation}
		where ${z_1 = \frac{(\lambda-\theta)(-z)\theta_1}{\lambda(\theta-z\theta_1)}}$, ${z_i=\frac{\theta(\theta_i-\theta_1)}{\theta_i(\theta-z\theta_1)}}$ for ${i \in \{2,\cdots N\}}$ and ${z_i=\frac{\theta(\lambda_i-\theta_1)}{\lambda_i(\theta-z\theta_1)}}$ for  ${i \in \{N+1,\cdots, 2N\}}.$ We then expand the $2N$ fold summation in (\ref{pdf_series}) into three terms: the first term with all the iterating variables ${p_1, p_2,..., p_{2N}}$ taking the value zero, the second term with exactly one non-zero iterating variable and the third term with the rest. By expanding, (\ref{pdf_series}) becomes the expression given in (\ref{fz_expand}) where $\rho = \mu+\sum\limits_{i=1}^{N}\mu_i$. Now, the term $\left({\frac{\theta}{\theta-z\theta_1}}\right)^{p_j}$ present in ${Term \ b}$ of (\ref{fz_expand}) has the following series expansion: 
		\begin{equation}
		\left({\frac{\theta}{\theta-z\theta_1}}\right)^{p_j} = 1+\frac{p_j \theta_1}{\theta} z + \frac{\theta_1^2\left(p_j+p_j^2 \right)}{2 \theta^2}z^2 + \frac{\left( 2p_j+3p_j^2+p_j^3\right)\theta_1^3}{6\theta^3}z^3 + \mathcal{O}\left(z^4\right).
		\end{equation}
		Similarly, the term $\left( \frac{-z\theta_1}{\theta-z\theta_1}\right)^{p_1}$ has the following series expansion : 
		\begin{equation}
		\left( \frac{-z\theta_1}{\theta-z\theta_1}\right)^{p_1} = \left(\frac{-z \theta_1}{\theta}\right)^{p_1} \left \lbrace 1+ \frac{p_1 \theta_1 z}{\theta} + \frac{\theta_1^2 \left(p_1+p_1^2 \right)z^2}{2\theta^2} + \frac{\theta_1^3 \left(2p_1+3p_1^2+p_1^3 \right)z^3}{6 \theta^3} + \mathcal{O}\left(z^4 \right)  \right \rbrace . 
		\end{equation}
		Thus, Term (a) will have all powers of $z \geq1$ and Term (b) will have all powers of $z \geq0$. Similarly, we can see that Term 3 will have all powers of $z \geq2$. Thus, we can rewrite the pdf expression as follows :
		\begin{equation}
		f_Z(z) = K_1 (-z)^{-\mu} \left(1 + K_2 (-z) + \mathcal{O}(-z)^2 \right)
		\label{pdf_simple}
		\end{equation} where $K_2$ will be a term independent of $z$. Comparing (\ref{pdf_simple}) with (\ref{fz_with_gpd}) and substituting $\omega(F)=0$, we can observe that the pdf of $\hat{\gamma}$ belongs to the domain of attraction of the reversed Weibull distribution with $\nu \times \delta=1$. Thus, we have $\delta={\mu}^{-1}$. 
		\begin{equation}
		\begin{aligned}
		& f_{\gamma}(z) =  \  \left(K_1 (-z)^{-\mu} \left(\frac{\theta_1}{\theta - z \theta_1}\right)^{\rho} \right) \left\lbrace  \underbrace{1}_\textit{Term 1}+  \right. \\ &  \left.  \sum\limits_{p_1=0}^\infty \underbrace{\underbrace{ \frac{(\rho)_{p_1} (m)_{p_1} }{(\mu)_{p_1} p_1!}\left( \frac{(\lambda-\theta)z\theta_1}{ \lambda(z\theta_1 + \theta)}\right)^{p_1}}_\textit{Term a} + \sum\limits_{j=2}^{2N} \sum\limits_{p_j=0}^\infty  \underbrace{\sum\limits_{k=2}^{N} \frac{(\rho)_{p_j}(\mu_k-m_k)_{p_j} }{\left(\sum\limits_{i=1}^{N}\mu_i\right)_{p_j}}\left( z_k\right)^{p_j}+\sum\limits_{k=N+1}^{2N}\frac{(\rho)_{p_j}(m_k)_{p_j} }{\left(\sum\limits_{i=1}^{N}\mu_i\right)_{p_j}}\left( z_k\right)^{p_j}}_\textit{Term b} \ }_\textit{Term 2}    \right. \\ & \left.   + \underbrace{  \sum\limits_{\underset{{ \exists \ i_1,i_2 s.t \ p_{i_1}p_{i_2}\neq0 \ \forall \ i_1\neq i_2}}{p_1,\cdots,p_{2N}=0;}}^{\infty} \frac{(\rho)_{p_1+\cdots+p_{2N}}}{(\mu)_{p_1}}\frac{(m)_{p_1}  \prod\limits_{i=2}^{N}(\mu_i-m_i)_{p_i}  \prod\limits_{i=N+1}^{2N}(m_i)_{p_i} \prod\limits_{i=1}^{2N}\frac{z_i^{p_i}}{p_i!} }{\left(\sum\limits_{i=1}^{N}\mu_i\right)_{p_2+\cdots+p_{2N}}}}_\textit{Term 3} \right\rbrace .
		\end{aligned}
		\label{fz_expand}
		\end{equation}
		
		Now that we have identified the $\delta$ neighbourhood for ${F_{\gamma}(z)}$, we make use of the following lemma from \cite{falk2010laws} to conclude the proof. 
		
		\begin{lemma}
			Suppose that the CDF $F$ (of i.i.d. RVs $z_1,\cdots,z_K$) is in the $\delta$ neighborhood $Q_2(\delta)$ of the GPD $W_{2,\nu}$  then there obviously exist constant $a > 0$ such that  $f(z)=\frac{1}{a}W'_{2,\nu}\left(\frac{z-\omega(F)}{a}\right)(1+O((1-W_{2,\nu}(z-\omega(F)))^\delta) \}$ ${W_{2,\nu}(z))^\delta)} $ for all $z$ in the left neighborhood of ${\omega(W_{2,\nu})}$. Consequently we have, 
			
			\begin{equation}
			\sup_{B\in \mathbb{B}} \left| \mathbb{P}\left(\left(\left( \frac{M_K}{a}  \right) / K^{\nu}\right) \in B\right) - G_{\nu}(B)\right| = O\left(\left(\frac{1}{K}\right)^{\delta} + \frac{1}{K}\right),
			\end{equation}
			where $\mathbb{B}$ denotes the Borel $\sigma$ algebra on $\mathbb{R}$ and $M_K = max \{z_1,\cdots,z_K \} $.	
		\end{lemma}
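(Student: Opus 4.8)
The statement is the Weibull specialization of the general rate-of-convergence theory for $\delta$-neighbourhoods of generalized Pareto distributions developed in \cite{falk2010laws}, so the plan is to follow that route and to isolate the two error sources that produce the $K^{-\delta}$ and the $K^{-1}$ contributions separately. The overall strategy is to rewrite the supremum over Borel sets as a total-variation (equivalently, an $L^1$) distance between the density of the normalized maximum and the density of the limiting extreme value distribution $G_\nu$, and then to bound this distance by comparing $F$ first against the pure GPD $W_{2,\nu}$, and afterwards $W_{2,\nu}^K$ against $G_\nu$.

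First I would note that $M_K=\max\{z_1,\dots,z_K\}$ has distribution function $F^K$, so after the affine normalization the normalized maximum has a density obtained by differentiating $F^K$ through the chain rule, and $\sup_{B\in\mathbb{B}}|\mathbb{P}(\text{normalized max}\in B)-G_\nu(B)|$ equals one half of the $L^1$ distance between that density and $G_\nu'$. This converts the claim into a quantitative $L^1$ estimate, which is the natural object to which the $\delta$-neighbourhood hypothesis applies. I would then handle the pure-GPD contribution: for $F=W_{2,\nu}$ the normalized maximum converges to $G_\nu$ at the exact rate $O(1/K)$, since at the canonical scale the normalized tail satisfies $1-W_{2,\nu}=(-z)^\nu/K$, whence $W_{2,\nu}^K=(1-(-z)^\nu/K)^K=\exp(-(-z)^\nu)(1+O(1/K))$ uniformly on compacta by a Taylor expansion of the logarithm, with the tails controlled by monotonicity. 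This produces the $K^{-1}$ term.

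Next I would insert the $\delta$-neighbourhood hypothesis (\ref{fz_with_gpd}) to pass from $W_{2,\nu}$ to the actual $F$: on a left neighbourhood of $\omega(F)$ the density $f$ equals $\frac{1}{a}W_{2,\nu}'(\cdot)$ times $(1+O((1-W_{2,\nu})^\delta))$, and since the relevant argument sits at the scale where $1-W_{2,\nu}=O(1/K)$, this multiplicative error is $O(K^{-\delta})$ pointwise. Propagating it through the $K$-fold product and integrating against the essentially Poissonized law of the top order statistic preserves the order, yielding the $K^{-\delta}$ term, and adding the two contributions gives $O(K^{-\delta}+K^{-1})$.

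The hard part will be the uniform tail control: I must show that the mass of the normalized maximum falling outside the left neighbourhood on which the $\delta$-bound is valid is negligible (exponentially small in $K$), and that the pointwise $O((1-W_{2,\nu})^\delta)$ perturbation, once raised to the $K$-th power and integrated, genuinely aggregates to $O(K^{-\delta})$ rather than a coarser order. This is precisely the delicate bookkeeping carried out in \cite{falk2010laws}; concretely I would import their estimate bounding the Hellinger distance between $F$ and $W_{2,\nu}$ by $O(K^{-\delta})$ under the neighbourhood condition, together with the standard inequality bounding total variation by the Hellinger distance, and then combine this with the $O(1/K)$ pure-GPD estimate to close the argument.
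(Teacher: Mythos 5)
Your proposal is correct and follows essentially the same route as the paper: the paper does not prove this lemma itself but imports it verbatim from \cite{falk2010laws}, and your sketch --- splitting the error into the pure-GPD-to-EVD contribution of order $O(1/K)$ and the $\delta$-neighbourhood perturbation of order $O(K^{-\delta})$, both controlled through Hellinger/total-variation bounds on the densities of the normalized maximum --- is precisely the argument carried out in that reference. Since the delicate bookkeeping is in both cases deferred to \cite{falk2010laws}, the two treatments coincide in substance.
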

		Since the CDF $F_{\gamma}(z)$ belongs to the $\delta$ neighborhood of $Q_2(\delta)$, by the previous lemma, the rate of convergence is $O\left(\left(\frac{1}{K}\right)^{\delta} + \frac{1}{K}\right)$ with $\delta = \mu^{-1}$ .
		
		\section{Proof for Theorem (\ref{evt_rate_main}) } \label{rate_convg}
		To prove this, we first utilize the continuous mapping theorem, which is given as follows \cite{billingsley2013convergence}: 
		\begin{theorem} \label{cts_map}
			Let $\{X_n\}_{n=1}^\infty$ be a sequence of random variables and $X$ another random variable, all assuming values from the same metric space $\mathcal{X}$. Let $\mathcal{Y}$ be another metric space, $f:\mathcal{X} \to \mathcal{Y}$ be a measurable function and $C_f := \{x : \ f \ is \ continuous \ at \ x\}$. Assuming that $X_n \xrightarrow[\text{}]{\text{D}}X$ and $\mathbb{P}(X \in C_f)=1$, we have $f(X_n) \xrightarrow[\text{}]{\text{D}}f(X)$.
		\end{theorem}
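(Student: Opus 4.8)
The plan is to prove the continuous mapping theorem via the Portmanteau characterization of weak convergence, reducing the claim to a purely topological relation between preimages under $f$. Recall that one equivalent formulation of $X_n \xrightarrow[\text{}]{\text{D}} X$ on a metric space is that $\limsup_{n\to\infty} \mathbb{P}(X_n \in F) \le \mathbb{P}(X \in F)$ for every closed set $F$; establishing the same inequality for $f(X_n)$ and $f(X)$ with an arbitrary closed $F \subseteq \mathcal{Y}$ will suffice. Since $f$ is measurable and $F$ is Borel, $\{f(X_n)\in F\}=\{X_n\in f^{-1}(F)\}$ is a legitimate event, so the whole argument can be phrased through these preimages.

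First I would establish the key topological lemma: writing $D_f := C_f^{c}$ for the set of discontinuity points of $f$, we have $\overline{f^{-1}(F)} \subseteq f^{-1}(F) \cup D_f$ for every closed $F$. To see this, take $x \in \overline{f^{-1}(F)}$ with $x \notin D_f$, so that $f$ is continuous at $x$. Because $\mathcal{X}$ is a metric space, closure is sequential, hence there exist $x_k \in f^{-1}(F)$ with $x_k \to x$; continuity at $x$ then gives $f(x_k) \to f(x)$, and since each $f(x_k) \in F$ and $F$ is closed, $f(x) \in F$, i.e. $x \in f^{-1}(F)$. This proves the inclusion, and it is the only place where the continuity hypothesis is actually used.

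Next I would chain the inequalities. Using $f^{-1}(F)\subseteq\overline{f^{-1}(F)}$ together with the Portmanteau bound applied to the closed set $\overline{f^{-1}(F)}$,
\begin{equation}
\limsup_{n\to\infty}\mathbb{P}(f(X_n)\in F)
=\limsup_{n\to\infty}\mathbb{P}(X_n\in f^{-1}(F))
\le\mathbb{P}(X\in\overline{f^{-1}(F)}).
\end{equation}
Applying the topological lemma, and invoking the hypothesis $\mathbb{P}(X\in C_f)=1$ so that $\mathbb{P}(X\in D_f)=0$, I obtain
\begin{equation}
\mathbb{P}(X\in\overline{f^{-1}(F)})
\le\mathbb{P}(X\in f^{-1}(F))+\mathbb{P}(X\in D_f)
=\mathbb{P}(f(X)\in F).
\end{equation}
Combining these yields $\limsup_{n\to\infty}\mathbb{P}(f(X_n)\in F)\le\mathbb{P}(f(X)\in F)$ for every closed $F\subseteq\mathcal{Y}$, and the converse direction of the Portmanteau theorem then delivers $f(X_n)\xrightarrow[\text{}]{\text{D}}f(X)$, as claimed.

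The main obstacle is the topological lemma; everything else is routine bookkeeping around Portmanteau, with minor attention to measurability (the closure of a Borel set is closed hence Borel, and the discontinuity set $D_f$ is Borel). An alternative route would invoke the Skorokhod representation theorem to realize $X_n$ and $X$ on a common probability space with $X_n\to X$ almost surely, whence $f(X_n)\to f(X)$ on the full-probability event $\{X\in C_f\}$ and the conclusion follows from almost-sure convergence. I would avoid this path here because Skorokhod representation requires separability of $\mathcal{X}$, whereas the Portmanteau argument above needs only the metric structure already assumed in the statement.
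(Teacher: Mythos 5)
Your proof is correct: the Portmanteau reduction together with the closure inclusion $\overline{f^{-1}(F)} \subseteq f^{-1}(F) \cup D_f$ (valid since closures in a metric space are sequential and $F$ is closed) is a complete and standard argument, and your side remarks on the measurability of $D_f$ and on Skorokhod requiring separability are accurate. The paper itself offers no proof of this statement---it quotes the continuous mapping theorem from Billingsley with a citation---and your argument is precisely the canonical proof found in that reference, so you have in effect reproduced the paper's (cited) proof.
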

		Let $R^K_{min}=\text{log}_2(1+ \gamma_{min}^K)$.  Since $f(x)=log_2(1+x)$ is a continuous function, using Theorem \ref{cts_map}, $R^K_{min} \xrightarrow[\text{}]{\text{D}}{R}_{min}$. Finally, we use the monotonic convergence theorem, which is given below \cite{billingsley2008probability}.
		\begin{theorem} \label{mct}
			Let $g_n \geq 0$ be a sequence of measurable functions such that $g_n(\omega) \to g(\omega) \ \forall \ \omega$ except maybe on a set of measure zero and $g_n(\omega) \geq g_{n+1}(\omega), \ n \geq 1$. We then have
			\begin{equation}
			\lim\limits_{n \to \infty} \int g_n \ d\mu = \int g \ d\mu.
			\end{equation}
		\end{theorem}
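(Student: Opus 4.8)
The plan is to read the statement correctly as the \emph{decreasing} form of the monotone convergence theorem: the hypotheses $g_n \geq 0$, $g_n \geq g_{n+1}$, and $g_n \to g$ almost everywhere describe a nonincreasing sequence of nonnegative measurable functions with pointwise (a.e.) limit $g$, so that $0 \leq g \leq g_n \leq g_1$ a.e. Rather than rebuild everything from simple functions, I would reduce the claim to the classical increasing (Beppo Levi) monotone convergence theorem, which I treat as the primitive measure-theoretic fact; this is the most economical route and matches the way the result is invoked in Appendix~\ref{rate_convg}.

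First I would introduce the complementary sequence $h_n := g_1 - g_n$. Because $g_{n+1} \leq g_n \leq g_1$, each $h_n$ is nonnegative and measurable, and $h_{n+1} - h_n = g_n - g_{n+1} \geq 0$, so $\{h_n\}$ is nondecreasing with $h_n \uparrow g_1 - g$ a.e. Applying the increasing monotone convergence theorem to $\{h_n\}$ then gives $\int h_n \, d\mu \to \int (g_1 - g)\, d\mu$, that is, $\int (g_1 - g_n)\, d\mu \to \int (g_1 - g)\, d\mu$. The remaining step is to peel off the common term $\int g_1$ by linearity of the integral and conclude $\int g_n \, d\mu \to \int g\, d\mu$.

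The genuine obstacle sits precisely in that last subtraction, and it is where I expect the proof to need care: separating $\int g_1$ from $\int g_n$ is legitimate only when $\int g_1\, d\mu < \infty$, since otherwise one meets an $\infty - \infty$ indeterminacy and the assertion as literally stated is actually false (take $g_n = \mathbf{1}_{[n,\infty)}$ under Lebesgue measure: every stated hypothesis holds, yet $\int g_n = \infty$ for all $n$ while $\int g = 0$). Thus the crux is to supply — or, in the paper's usage, to verify — integrability of the dominating first term. In the application $g_n = R^K_{min} = \log_2(1+\gamma^K_{min})$ is nonincreasing in the index because the minimum SIR over a growing collection can only decrease, and $\int g_1\, d\mu = \mathbb{E}[\log_2(1+\gamma^1_{min})] < \infty$ for the $\kappa-\mu$ shadowed SIR, so the subtraction is justified and the conclusion follows. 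As a cross-check I would also record the equivalent one-line route through Fatou's lemma, which packages the same integrability requirement: Fatou applied to $g_n$ gives $\int g \leq \liminf \int g_n$, while Fatou applied to $g_1 - g_n \geq 0$ gives $\int(g_1 - g) \leq \int g_1 - \limsup \int g_n$; subtracting the finite quantity $\int g_1$ yields $\limsup \int g_n \leq \int g$, and combining the two bounds delivers $\lim \int g_n\, d\mu = \int g\, d\mu$.
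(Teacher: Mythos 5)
Your proof is correct, and it actually does more work than the paper, which offers no proof at all: Theorem \ref{mct} is quoted as a standard fact from \cite{billingsley2008probability} and invoked directly in Appendix \ref{rate_convg}. Your reduction of the decreasing case to the increasing (Beppo Levi) theorem via $h_n := g_1 - g_n$, followed by peeling off $\int g_1 \, d\mu$ by linearity, is the standard derivation of this form, and your double-Fatou argument is an equivalent packaging of the same idea. The genuinely valuable content is the defect you flag: as transcribed, with $g_n \geq g_{n+1}$, the statement is \emph{false} without an integrability hypothesis, and your counterexample $g_n = \mathbf{1}_{[n,\infty)}$ under Lebesgue measure settles this, since $\int g_n \, d\mu = \infty$ for every $n$ while $\int g \, d\mu = 0$. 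Billingsley's monotone convergence theorem is stated for \emph{nondecreasing} sequences, where no such hypothesis is needed; the paper reversed the monotonicity (as its application requires) but silently dropped the condition $\int g_1 \, d\mu < \infty$ that the decreasing form needs, precisely at the $\infty - \infty$ subtraction you isolate.

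One small imprecision in your patch of the application is worth fixing. In Appendix \ref{rate_convg} the functions fed to the theorem are not the rate random variables themselves but the tail functions $g_K(\omega) = \mathbb{P}(R^K_{min} > \omega)$, integrated in $\omega$ over $(0,\infty)$: if you instead take $g_K = R^K_{min}$ on the probability space with $\mu = \mathbb{P}$, the monotone convergence theorem identifies $\lim_K \mathbb{E}[R^K_{min}]$ with the expectation of the \emph{almost sure} limit of $R^K_{min}$, which is not the quantity $\mathbb{E}[R_{min}]$ obtained from the distributional (Weibull) limit in Theorem \ref{evt_rate_main}; the paper's exchange of limit and integral is performed on the tail functions after continuous mapping. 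With that reading, your verification goes through unchanged: $g_K(\omega)$ is pointwise nonincreasing in $K$ because $\gamma^{K+1}_{min} \leq \gamma^{K}_{min}$, and the needed integrability $\int_0^\infty g_1(\omega) \, d\omega = \mathbb{E}[\log_2(1+\gamma_1)] < \infty$ holds because $\mathbb{P}(\gamma_1 > z)$ decays polynomially in $z$ (the regularly varying tail underlying the Frechet law for the maximum in \cite{subhash2019asymptotic}), so $\mathbb{P}\left(\log_2(1+\gamma_1) > t\right)$ decays geometrically in $t$. In short: your proof is right, your counterexample exposes a genuine gap in the theorem as stated in the paper, and your added hypothesis is exactly what the paper's own application satisfies.
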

		Here, we know that $\gamma_{min}^K \geq \gamma_{min}^{K+1}, \ \forall \ K$ and hence $\mathbb{P}(\gamma_{min}^K \leq \omega) \leq \mathbb{P}(\gamma_{min}^{K+1} \leq \omega)$ and thus $1-F_{\gamma_{min}^K}(\omega) \geq 1-F_{\gamma_{min}^{K+1}}(\omega)$. The logarithmic function is monotonic and hence  $1-F_{R^K_{min}}(\omega) \geq 1-F_{R^{K+1}_{min}}(\omega)$. For a positive RV $X$, note that the expectation is given by 
		\begin{equation}
		\mathbb{E}[X] = \int\limits_{0}^{\infty} \mathbb{P}(X>x) \ dx = \int\limits_{0}^{\infty} (1-F_X(x)) \ dx.
		\end{equation} 
		Thus, by making use of Theorem \ref{mct} we have $\lim\limits_{K \to \infty} \mathbb{E}[R^K_{min}] = \lim\limits_{K \to \infty} \int\limits_{0}^{\infty}  \mathbb{P}(R^K_{min}>\omega) \ d\omega= \int\limits_{0}^{\infty} \lim\limits_{K \to \infty} \mathbb{P}(R^K_{min}>\omega) \ d\omega = \mathbb{E}[{R}_{min}] $. Thus, we have the required result.
		
	\end{appendices}
	\bibliographystyle{IEEEtran}
	\bibliography{reference}
\end{document}